\documentclass[11pt]{article}
\usepackage[margin=2.54cm]{geometry}                
\geometry{letterpaper}                   
\usepackage{graphicx}
\usepackage{amssymb}
\usepackage{epstopdf}
\usepackage{amsmath}
\usepackage{amsthm}

\newcommand{\rd}{{\rm d}}

\newtheorem{thm}{Theorem}

\newtheorem{corollary}[thm]{Corollary}
\newtheorem{lemma}[thm]{Lemma}
\newtheorem{prop}[thm]{Proposition}
\newtheorem{example}[thm]{Example}
\newtheorem{remark}[thm]{Remark}

\usepackage{amsfonts} 
\usepackage{fullpage} 

\begin{document}

\renewcommand{\thefootnote}{\fnsymbol{footnote}}
\title{Lieb-Robinson bounds for classical anharmonic lattice systems}
 
\author{Hillel Raz\\[10pt]
Department of Mathematics\\ University of California at Davis\\
Davis CA 95616, USA\\
\vspace{.3cm} Email: hraz@math.ucdavis.edu\\
\vspace{.3cm} and\\
\vspace{.3cm} Robert Sims \\
Department of Mathematics\\ 
University of Arizona\\
Tucson, AZ 85721, USA\\
Email:  rsims@math.arizona.edu}

\date{Version: \today }
\maketitle
\bigskip

\abstract{We prove locality estimates, in the form of Lieb-Robinson bounds, for classical oscillator systems defined on 
a lattice. Our results hold for the harmonic system and a variety of anharmonic perturbations with long range interactions. 
The anharmonic estimates are applicable to a special class of observables, the Weyl functions, and the bounds which 
follow are not only independent of the volume but also the initial condition. }

\footnotetext[1]{Copyright \copyright\ 2009 by the authors. 
This paper may be reproduced, in its entirety, for non-commercial
purposes.}

\setcounter{section}{0}

%
%
%
%

\section{Introduction} \label{sec:intro}

A notion of locality is crucial in rigorously analyzing most physical systems. 
Typically, sets of local observables are associated with bounded regions of space, 
and one is interested in how these observables evolve dynamically with respect to 
the interactions governing the system. In relativistic theories, the evolution 
of a local observable remains local, i.e. the associated dynamics is restricted to a 
light cone.  For non-relativistic models, such as those we will be considering in the 
present work, the dynamics does not preserve locality in the sense that, generically, 
an observable initially chosen localized to a particular site is immediately evolved 
into an observable dependent on all sites of the system. 

In 1972,  Lieb and Robinson \cite{lieb1972} explored a quasi-locality of the dynamics corresponding to
non-relativistic quantum spin systems. Roughly speaking, a quantum spin system is described by a self-adjoint 
Hamiltonian, which describes the inteactions of the system, and its associated Heisenberg dynamics,
see e.g. \cite{bratteli1997} for more details. The estimates they proved, which we will refer to as Lieb-Robinson bounds, demonstrate that, 
up to exponentially small errors, the time evolution of a local observable remains essentially supported in a
ball of radius proportional to $v|t|$ for some $v>0$. This quantity $v$ defines a natural velocity of propagation, 
and it can be estimated in terms of the system's free parameters, for example, the interaction strength
of the Hamiltonian. 

The models analyzed in this paper will correspond to a classical system of oscillators evolving 
according to a Hamiltonian dynamics. Hamiltonians of this type have frequently appeared
in the literature as their analysis provides an important means of 
studying the emergence of non-equilibrium phenomena in macroscopic systems.
For example, rigorous results on the existence of the thermodynamic limit for 
these models date back to \cite{LLL}.  A notion of quasi-locality, similar to the Lieb-Robinson bounds 
mentioned above, for these oscillator systems was originally considered in 1978 by Marchioro et. al. in \cite{MPPT}, and 
a recent generalization of these estimates appeared in \cite{butta2007}.  Both of these results were obtained in the spirit of deriving 
an analogue of the Lieb-Robinson bounds found in \cite{lieb1972}.

Over the past few years a number of important improvements on the original Lieb-Robinson bounds have appeared in the 
literature \cite{hastings2004,nachtergaele2005,hastings2005, NaOgSi,harm}, see \cite{review} for the most current review article. 
These new estimates have found a variety of intriguing applications \cite{hastings2004, bravyi2006, nasi, hastings2007}, but perhaps
most interestingly for the present work, the results found in \cite{harm} establish bounds which are applicable 
beyond the context of quantum spin systems. In \cite{harm}, the authors prove a version of the Lieb-Robinson 
bounds for quantum anharmonic lattice systems. Motivated by the methodology introduced in \cite{harm}, we
are returning to the classical setting to re-derive distinct estimates for anharmonic lattice systems. 

To express our locality results more precisely, we introduce the following notation.
We will consider systems confined to a large but finite subset $\Lambda \subset \mathbb{Z}^{\nu}$;
here $\nu \geq 1$ is an integer. With each site $x \in \Lambda$, we will associate an oscillator with 
coordinate $q_x \in \mathbb{R}$ and momentum $p_x \in \mathbb{R}$. The state of the system in
$\Lambda$ will be described by a sequence $\mathrm{x} = \{ (q_x, p_x) \}_{x \in \Lambda}$, and
phase space, i.e. the set of all such sequences, will be denoted by $\mathcal{X}_{\Lambda}$.

A Hamiltonian, $H$, is a real-valued function on phase space. Typically the Hamiltonian of interest
generates a flow, $\Phi_t$, on phase space. Specifically, given $H: \mathcal{X}_{\Lambda} \to \mathbb{R}$
one defines, for any $t \in \mathbb{R}$, a function $\Phi_t : \mathcal{X}_{\Lambda} \to \mathcal{X}_{\Lambda}$ by 
setting $\Phi_t( \mathrm{x}) = \{ (q_x(t), p_x(t)) \}_{x \in \Lambda}$, the sequence whose components satisfy Hamilton's equations:
for any $x \in \Lambda$,
\begin{equation}
\begin{split}
\dot{q}_x(t) \, = \, \frac{ \partial H}{ \partial p_x} \left( \Phi_t(\mathrm{x}) \right), \\
\dot{p}_x(t) \, = \, -\frac{ \partial H}{ \partial q_x} \left( \Phi_t(\mathrm{x}) \right),
\end{split} 
\end{equation}
with initial condition $\Phi_0(\mathrm{x}) = \mathrm{x}$.

To measure the effects of this Hamiltonian dynamics on the system, one introduces observables.
An observable $A$ is a complex-valued function of phase space. We will denote by
$\mathcal{A}_{\Lambda}$ the space of all local observables in $\Lambda$, i.e. 
the set of all functions $A: \mathcal{X}_{\Lambda} \to \mathbb{C}$. A given Hamiltonian, 
$H$, generates a dynamics $\alpha_t$ on the space of local observables in the sense that, for any $t \in \mathbb{R}$,
the dynamics $\alpha_t : \mathcal{A}_{\Lambda} \to \mathcal{A}_{\Lambda}$
is defined by setting $\alpha_t(A) = A \circ \Phi_t$. 

For the locality result we will present, the notion of support of a local observable is important. Given $A \in \mathcal{A}_{\Lambda}$, the
support of $A$ is defined to be the minimal set $X \subset \Lambda$ for which $A$ depends only on those parameters $q_x$ or $p_x$ with $x \in X$.

As in \cite{MPPT}, see also \cite{butta2007}, our locality result will be expressed in terms of the Poisson bracket between local observables. Here the Poisson bracket
is the observable given by
\begin{equation}
\left\{ A, B \right\} \, = \, \sum_{x \in \Lambda} \frac{ \partial A}{ \partial q_x} \cdot \frac{ \partial B}{ \partial p_x} \, - \, \frac{ \partial A}{ \partial p_x} \cdot \frac{ \partial B}{ \partial q_x} \, ,
\end{equation} 
for sufficiently smooth observables $A$ and $B$.

Observe that for disjoint subsets $X,Y \subset \Lambda$ and observables $A$ with support in $X$ and $B$ with support in $Y$, it is clear that
$\{ A, B \} = 0$. The quasi-locality question of interest in this context is: given a Hamiltonian $H$, its corresponding dynamics $\alpha_t$, and a pair of
observables $A$ and $B$ with disjoint supports, is there a bound on the quantity $\{ \alpha_t(A), B \}$ for small times $t$? Physically, one expects that
if the Hamiltonian is comprised of local interaction terms, then there should be a bound on the velocity of propagation through the system. Such intuition
could be confirmed by establishing an estimate of the form
\begin{equation} \label{eq:clrb}
\left|  \left\{ \alpha_t(A), B \right\} ( \mathrm{x})  \right| \, \leq \, C e^{- \mu( d(X,Y) - v |t|)},
\end{equation} 
where $d(X,Y)$ denotes the distance between the supports of the local observables $A$ and $B$. This bound demonstrates that for times $t$ with
$|t| \leq d(X,Y) /v$, the Poisson bracket remains exponentially small, and the number $v>0$ appearing above is a bound on the system's velocity. 
In proving estimates of the form (\ref{eq:clrb}), special attention must be given to the dependence of the
constants $C$, $\mu$, and $v$ on the observables $A$ and $B$, the initial condition $\mathrm{x}$, and the free parameters in the Hamiltonian.
Most crucially, these constants must be independent of the underlying volume $\Lambda$, so that they persist in the thermodynamic limit; once
the existence of such a limit has been established. 

As we have mentioned before, bounds of the form (\ref{eq:clrb}) have appeared in the literature, see \cite{MPPT} and more 
recently \cite{butta2007}, for a variety of different Hamiltonians. Both our approach and our 
estimates are distinct from those mentioned above. For example, we do not work with time invariant states, and
our bounds are independent of the initial conditions. Our main goal is to provide a new method for establishing 
these estimates, and we are strongly motivated by the quantum techniques found in \cite{harm}.  

We begin by considering finite volume restrictions of the harmonic Hamiltonian, i.e. $H_h^{\Lambda} : \mathcal{X}_{\Lambda} \to \mathbb{R}$ is
given by
\begin{equation}
H_h^{\Lambda}( \mathrm{x}) \, = \, \sum_{x \in \Lambda} p^2_x + \omega^2 q^2_x + \sum_{j = 1}^{\nu} \lambda_j (q_x - q_{x+e_j})^2,
\end{equation} 
 where $e_j$, for $j = 1, \ldots, \nu$, are the cannonical basis vectors in $\mathbb{Z}^{\nu}$, and the parameters $\omega \geq 0$ and
 $\lambda_j \geq 0$ are the on-site and coupling strength, respectively.  As is well-known, a variety of explicit calculations may be performed
 for this harmonic Hamiltonian. Perhaps most importantly, for any integer $L \geq 1$ and each subset $\Lambda_L = (-L, L]^{\nu} \subset \mathbb{Z}^{\nu}$, 
 the flow $\Phi_t^{h,L}$ corresponding to $H_h^{\Lambda_L}$ may be explicitly computed, see Section~\ref{subsec:sb} for details. 
 Once this is known, a locality estimate easily follows for a specific set of observables. We will equip the set of local observables
 $\mathcal{A}_{\Lambda_L}$ with the sup-norm, and we will say that $A \in \mathcal{A}_{\Lambda_L}$ is bounded if
 \begin{equation}
 \| A \|_{\infty} = \sup_{\mathrm{x} \in \mathcal{X}_{\Lambda_L}} | A( \mathrm{x}) | 
 \end{equation}
 is finite. Furthermore, we will denote by $\mathcal{A}_{\Lambda_L}^{(1)}$ the set of all $A \in \mathcal{A}_{\Lambda_L}$ for which:
given any $x \in \Lambda_L$, $\frac{\partial A}{\partial q_x} \in \mathcal{A}_{\Lambda_L}$, 
$\frac{\partial A}{\partial p_x} \in \mathcal{A}_{\Lambda_L}$, and
\begin{equation}
\| \partial A \|_{\infty} = \sup_{x \in \Lambda_L} \max \left( \left\| \frac{\partial A}{\partial q_x} \right\|_{\infty}, \left\| \frac{\partial A}{\partial p_x} \right\|_{\infty} \right) \, < \, \infty \, .
\end{equation}
We can now state our first result.
 \begin{thm}\label{thm:harm}
Let $X$ and $Y$ be finite subsets of $\mathbb{Z}^{\nu}$ and take $L_0$ to be the smallest integer such that $X, Y \subset \Lambda_{L_0}$.
For any $L \geq L_0$, denote by $\alpha_t^{h,L}$ the dynamics corresponding to $H_h^{\Lambda_L}$. 
For any $\mu >0$ and any observables $A,B \in \mathcal{A}_{\Lambda_{L_0}}^{(1)}$ with 
support of $A$ in $X$ and support of $B$ in $Y$, there exist positive numbers $C$ and $v_h$, both independent of $L$, such that the bound 
\begin{equation} \label{eq:thm1}
\left\| \left\{ \alpha_t^{h, L}(A), B \right\} \right\|_{\infty} \le C \| \partial A\|_{\infty} \| \partial B \|_{\infty} \min ( |X|, |Y|) e^{-\mu \left(d(X,Y) - v_{\rm{h}} |t| \right)}
\end{equation}
holds for all $t \in \mathbb{R}$.
\end{thm}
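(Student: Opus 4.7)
The plan is to exploit the linearity of the flow. Since $H_h^{\Lambda_L}$ is quadratic, Hamilton's equations are linear and $\Phi_t^{h,L}$ is a linear symplectic map on $\mathcal{X}_{\Lambda_L}$, whose four blocks I denote $K^{qq}(t),K^{qp}(t),K^{pq}(t),K^{pp}(t)$, so that $q_y(t)=\sum_x K^{qq}_{y,x}(t)q_x+K^{qp}_{y,x}(t)p_x$ and similarly for $p_y(t)$. The chain rule applied to $\alpha_t^{h,L}(A)=A\circ\Phi_t^{h,L}$ expresses $\partial_{q_x}\alpha_t^{h,L}(A)$ and $\partial_{p_x}\alpha_t^{h,L}(A)$ as finite linear combinations of $(\partial_{q_y}A)\circ\Phi_t^{h,L}$ and $(\partial_{p_y}A)\circ\Phi_t^{h,L}$ with the $K^{\cdot\cdot}_{y,x}(t)$ as coefficients. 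Substituting into the definition of $\{\alpha_t^{h,L}(A),B\}$ and using that $A$ is supported in $X$ and $B$ in $Y$, I obtain the pointwise estimate
\[
\bigl|\{\alpha_t^{h,L}(A),B\}(\mathrm{x})\bigr|\leq 2\|\partial A\|_\infty\|\partial B\|_\infty\sum_{y\in X}\sum_{x\in Y}\mathcal{K}_{y,x}(t),
\]
where $\mathcal{K}_{y,x}(t)$ is the sum of the absolute values of the four entries indexed by $(y,x)$.

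The task is now to bound $\mathcal{K}_{y,x}(t)$. With periodic boundary conditions, the discrete Fourier transform on $\Lambda_L$ decouples $H_h^{\Lambda_L}$ into $|\Lambda_L|$ independent oscillators with dispersion $\gamma_L(k)^2=\omega^2+4\sum_{j=1}^{\nu}\lambda_j\sin^2(k_j/2)$, and each entry $K^{\cdot\cdot}_{y,x}(t)$ becomes a discrete Fourier sum of $e^{ik\cdot(y-x)}$ against a trigonometric propagator: $\cos(2\gamma_L(k)t)$, $\gamma_L(k)\sin(2\gamma_L(k)t)$, or $\sin(2\gamma_L(k)t)/\gamma_L(k)$. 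This is a direct classical counterpart of the quantum harmonic propagators treated in \cite{harm}. The apparently singular factor $\sin(2\gamma t)/\gamma$ at $\omega=0$ is handled by its power series in $\gamma$, which is entire and equals $2t$ at $\gamma=0$.

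For any prescribed $\mu>0$, $\gamma_L$ extends analytically into the complex strip $|\Im k|\leq\mu$, uniformly in $L$. Translating the contour of integration by $i\mu$ in the direction of $y-x$ produces the factor $e^{-\mu|y-x|}$, while the propagator grows on the shifted contour by at most $e^{2|t|\sigma(\mu)}$ with $\sigma(\mu):=\sup_{|\Im k|\leq\mu}|\Im\gamma(k)|$. Setting $v_h=v_h(\mu):=2\sigma(\mu)/\mu$, this yields
\[
\mathcal{K}_{y,x}(t)\leq C\,e^{-\mu(|y-x|-v_h|t|)}
\]
with $C$ and $v_h$ independent of $L$.

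Finally, for any $y\in X$ the tail sum $\sum_{x\in Y}e^{-\mu|x-y|}\leq\sum_{x:\,|x-y|\geq d(X,Y)}e^{-\mu|x-y|}$ is dominated by $C_\nu e^{-\mu'd(X,Y)}$ for any $\mu'<\mu$ (absorbing the polynomial growth of lattice spheres into the constant); summing next over $y\in X$ gives $|X|C_\nu e^{-\mu'd(X,Y)}$, while reversing the order of summation gives $|Y|C_\nu e^{-\mu'd(X,Y)}$, so $\min(|X|,|Y|)$ may be inserted. Relabeling $\mu'\mapsto\mu$ produces the stated bound. The main obstacle is the third step: verifying the $L$-uniform analytic continuation of $\gamma_L$ into the strip, carefully controlling $\sigma(\mu)$ on the shifted contour (including the massless case $\omega=0$), and confirming that the discrete Fourier sums on $\Lambda_L$ satisfy the same bound as their continuous Brillouin-zone counterparts, so that neither $C$ nor $v_h$ depends on $L$.
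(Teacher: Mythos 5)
Your outline coincides with the paper's route: the blocks $K^{qq},K^{qp},K^{pq},K^{pp}$ of the linear flow are exactly the kernels $h_t^{(0)},h_t^{(\pm1)}$ of Lemma~\ref{lem:harmsol}, your chain-rule estimate is precisely Theorem~\ref{thm:genharmbd}, and your final double-sum manipulation (splitting off $e^{-\mu' d(X,Y)}$ and extracting $\min(|X|,|Y|)$) is how the paper passes from Theorem~\ref{thm:genharmbd} to Theorem~\ref{thm:harm}. The one place where all the analytic content sits is the $L$-uniform bound $\mathcal{K}_{y,x}(t)\le C e^{-\mu(d(x,y)-v_h|t|)}$, and this is exactly the step you defer as ``the main obstacle.'' The paper does not prove it either: it imports it as Lemma~\ref{lem:htx}, citing Lemma~3.7 of \cite{harm}, where it is established by a Taylor-expansion/finite-propagation argument (expand the propagators in powers of $t$; each power of $\gamma(k)^2$ is a trigonometric polynomial, hence a kernel of finite range), not by contour shifting. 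So relative to the paper your proposal is incomplete precisely at the step that carries the theorem.

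If you want to carry out your complex-analytic version, three concrete points need attention. First, $h_t^{(m)}$ is a finite sum over $\Lambda_L^*$, not an integral over the Brillouin zone, so you cannot literally translate a contour; the clean fix is to expand the periodic analytic symbol in its (continuum) Fourier coefficients, bound those by a genuine contour shift, and control the aliasing sum $\sum_{m\equiv x\,(\mathrm{mod}\,2L)}$, which is what produces decay in the torus distance $d(x,y)$ uniformly in $L$. Second, at $\omega=0$ the function $\gamma$ itself is not analytic at $k=0$; your observation about $\sin(2\gamma t)/\gamma$ should be stated for all three propagators, namely that $\cos(2\gamma t)$, $\gamma\sin(2\gamma t)$ and $\sin(2\gamma t)/\gamma$ are entire functions of $\gamma^2$, which is a trigonometric polynomial, so analyticity in the strip (and hence $\sigma(\mu)<\infty$, with $\sigma(\mu)\lesssim c_{\omega,\lambda}\cosh(\mu/2)$, recovering a velocity of the same form as (\ref{eq:defvh})) holds uniformly in $L$. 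Third, the $\sin(2\gamma t)/\gamma$ kernel produces, on the shifted contour and in the massless case, a prefactor growing linearly in $|t|$ (compare the $-2t/|\Lambda_L|$ term in the paper's remark after Lemma~\ref{lem:harmsol}); this is harmless but must be explicitly absorbed into $C$ or into a slightly larger $v_h$. With these repairs your argument closes and is a legitimate alternative to the cited power-series proof; as written, the proposal leaves the decisive estimate unproven.
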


Some additional comments are in order.
First, the quantity $d(X,Y)$ appearing above denotes the distance between the sets $X$ and $Y$, measured in the
$L^1$-sense, and for any $Z \subset \Lambda_L$, the number $|Z|$ is the cardinality of $Z$. Next, the fact that the bound (\ref{eq:thm1}) is true for any $\mu >0$ implies that the Poisson bracket above has arbitrarily fast
exponential decay in space. To achieve faster decay in space, however, the numbers $C$ and $v_h$ increase. We describe an
optimal harmonic velocity $v_h( \mu)$ in Section \ref{subsec:genloc}. 

One novelty of our approach is that the bound in (\ref{eq:thm1}) is not only independent of the length scale $L$, it is also
independent of the initial condition $\mathrm{x} \in \mathcal{X}_{\Lambda_L}$. This fact remains true when we consider
anharmonic perturbations, see Theorem \ref{thm:anharm} below, and thereby distinguishes our result from that of \cite{MPPT} and \cite{butta2007}.

Our next result, Theorem~\ref{thm:anharm} below, concerns single site perturbations of the harmonic Hamiltonian. 
To state this precisely, fix a function $V: \mathbb{R} \to \mathbb{R}$. For any site 
$z \in \mathbb{Z}^{\nu}$ define $V_z : \mathcal{X}_{\Lambda_L} \to \mathbb{R}$ by setting
$V_z( \mathrm{x}) = V(q_z)$. We consider finite volume anharmonic Hamiltonians
$H^{\Lambda_L} : \mathcal{X}_{\Lambda_L} \to \mathbb{R}$ of the form 
\begin{equation}
H^{\Lambda_L} = H_h^{\Lambda_L}  + \sum_{z \in \Lambda_L} V_z.  
\end{equation}
In order to prove Theorem~\ref{thm:anharm}, we need the following assumptions on $V$:
$V \in C^2(\mathbb{R})$, $V' \in L^1(\mathbb{R})$, $V'' \in L^{\infty}(\mathbb{R})$, and
$$
\kappa_V = \int\,  |r| \, \left| \widehat{V'} (r) \right| \, \rd r \, < \infty \, .
$$
Here $\widehat{V'}$ is the Fourier transform of $V'$. Under these assumptions, we prove a locality 
result analogous to Theorem \ref{thm:harm}. As is discussed in Section~\ref{subsec:weyl}, see also
the proof in Section~\ref{sec:anharmss}, a specific class of observables, the Weyl functions, are 
particularly well-suited for our considerations, and they are defined as follows. For any function
 $f : \Lambda_L \to \mathbb{C}$, the Weyl function generated by $f$, denoted by $W(f)$, is the observable $W(f) : \mathcal{X}_{\Lambda_L} \to \mathbb{C}$ given by
 \begin{equation} \label{eq:defweyl}
 [W(f)]( \mathrm{x}) \, = \, \mbox{exp} \left[ \, i \, \sum_{x \in \Lambda_L} \mbox{Re}[f(x)] q_x + \mbox{Im}[f(x)] p_x \, \right] \, .
 \end{equation}
 Clearly, if $f$ is supported in $X \subset \Lambda_L$, then $W(f)$ is supported in $X$ as well. 
 Moreover, it is easy to see that for any function $f: \Lambda_L \to \mathbb{C}$, $\| W(f) \|_{\infty} = 1$.
 Our next result is
 \begin{thm}\label{thm:anharm}
Let $V : \mathbb{R} \to \mathbb{R}$ satisfy $V \in C^2(\mathbb{R})$, $V' \in L^1(\mathbb{R})$, $V'' \in L^{\infty}(\mathbb{R})$, and $\kappa_V$, as defined above,
is finite. Take $X$ and $Y$ to be finite subsets of $\mathbb{Z}^{\nu}$ and let $L_0$ to be the smallest integer such that $X, Y \subset \Lambda_{L_0}$.
For any $L \geq L_0$, denote by $\alpha_t^L$ the dynamics corresponding to $H^{\Lambda_L}$. 
For any $\mu >0$ and any functions $f,g: \Lambda_{L_0} \to \mathbb{C}$ with 
support of $f$ in $X$ and support of $g$ in $Y$, there exist positive numbers $C$ and $v_{ah}$, both independent of $L$, such that the bound 
\begin{equation} \label{eq:thm2}
\left\| \left\{ \alpha_t^L(W(f)), W(g) \right\} \right\|_{\infty} \le C \|f\|_{\infty} \|g\|_{\infty} \min(|X|,|Y|) e^{-\mu \left(d(X,Y) - v_{ah} |t| \right)}
\end{equation}
holds for all $t \in \mathbb{R}$.
\end{thm}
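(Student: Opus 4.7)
The plan is to derive Theorem \ref{thm:anharm} from the harmonic bound of Theorem \ref{thm:harm} via a Dyson-type expansion around the harmonic flow, exploiting two special features of Weyl functions. First, because Hamilton's equations for $H_h^{\Lambda_L}$ are linear, there is an $\mathbb{R}$-linear map $T_s$ on $\mathbb{C}^{\Lambda_L}$ with $\alpha_s^{h,L}(W(h))=W(T_s h)$, and (as a by-product of the proof of Theorem \ref{thm:harm}) the matrix entries of $T_s$ satisfy an off-diagonal decay $|T_s(x,y)|\le C\,e^{-\mu(d(x,y)-v_h|s|)}$. Second, the Poisson bracket of any Weyl function with a single-site term $V_z$ produces only $V'(q_z)$, and the hypothesis $\kappa_V<\infty$ lets us use Fourier inversion to re-express $V'$ at any real linear coordinate as a superposition of Weyl functions.

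Concretely, I pass to the interaction-picture observable $\gamma_t:=\alpha_{-t}^{h,L}\circ\alpha_t^L$, which satisfies $\dot{\gamma}_t(A)=\{\gamma_t(A),\tilde V_t\}$ with $\tilde V_t:=\alpha_{-t}^{h,L}(V)=\sum_z V\circ\ell_{z,-t}$, where $\ell_{z,-t}(\mathrm{x})$ is the $q_z$-coordinate of $\Phi_{-t}^{h,L}(\mathrm{x})$, a real linear functional in $\mathrm{x}$ whose Weyl parameter is $\phi_{z,t}=T_{-t}\delta_z$. Iterating yields the Dyson series
\[
\gamma_t(W(f))=\sum_{n\ge 0}\int_{0\le s_n\le\cdots\le s_1\le t}\{\ldots\{W(f),\tilde V_{s_n}\},\ldots,\tilde V_{s_1}\}\,ds_n\cdots ds_1.
\]
A direct calculation gives $\{W(h),V\circ\ell_{z,-s}\}=i\,\mathrm{Im}\langle h,\phi_{z,s}\rangle\,V'(\ell_{z,-s})\,W(h)$, which by Fourier inversion equals $\tfrac{i}{2\pi}\,\mathrm{Im}\langle h,\phi_{z,s}\rangle\int\widehat{V'}(r)\,W(h+r\phi_{z,s})\,dr$. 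Inducting, the $n$-th Dyson term is a multi-integral over $s_k,r_k$ and sum over $z_k$ of Weyl functions $W(f+\sum_{k=1}^n r_k\phi_{z_k,s_k})$ with scalar weights assembled from $\widehat{V'}(r_k)$ factors and $\mathrm{Im}\langle\cdot,\phi_{z_k,s_k}\rangle$ factors.

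Applying $\alpha_t^{h,L}$ and then Poisson-bracketing with $W(g)$ yields, term by term, expressions of the form $-\mathrm{Im}\langle T_t h,g\rangle\,W(T_t h+g)$ with sup-norm $|\mathrm{Im}\langle T_t h,g\rangle|$. Splitting $h=f+\sum_k r_k\phi_{z_k,s_k}$ bilinearly, the ``origin'' contribution $|\mathrm{Im}\langle T_t f,g\rangle|$ is dominated directly by Theorem \ref{thm:harm} as $\le C\|f\|_\infty\|g\|_\infty\min(|X|,|Y|)e^{-\mu(d(X,Y)-v_h|t|)}$, while each intermediate piece $|r_k|\cdot|\mathrm{Im}\langle T_t\phi_{z_k,s_k},g\rangle|$ is controlled by the off-diagonal decay of $T_t$ applied to the displacement between $z_k$ and $Y$, yielding exponential decay in $d(z_k,Y)-v_h|t-s_k|$. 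The factor $|r_k|$ combines with $|\widehat{V'}(r_k)|$ into the finite integral $\int|r|\,|\widehat{V'}(r)|\,dr=\kappa_V$---precisely the quantity assumed in the hypotheses.

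The main obstacle will be the combinatorial step of summing the Dyson series while preserving exponential decay in $d(X,Y)-v_{ah}|t|$. Each order $n$ produces, after time and Fourier integrals, a prefactor $\le(C\kappa_V|t|)^n/n!$; the sum over intermediate sites $z_1,\ldots,z_n$ must then be dominated by the product of exponential weights, morally representing decay along a path $X\to z_1\to\cdots\to z_n\to Y$ of total length $\ge d(X,Y)$. Because Theorem \ref{thm:harm} is valid for every $\mu>0$, one retains the freedom to slightly reduce $\mu$ in order to absorb the combinatorial overhead of the site sums, trading this for a modest enlargement $v_h\mapsto v_{ah}$ with $v_{ah}-v_h$ proportional to $\kappa_V$; this is the classical counterpart of the quantum Dyson-expansion strategy of \cite{harm} on which our approach is modeled.
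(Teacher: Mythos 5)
Your overall strategy shares the paper's main ingredients (interpolating between the harmonic and anharmonic dynamics, invariance of Weyl functions under the harmonic flow, Fourier-decomposing $V'$ into Weyl functions so that $\kappa_V$ appears, and $F_\mu$-type convolution bounds for the site sums), but the way you organize the expansion has a genuine gap. In your interaction-picture Dyson series, the $n$-th nested bracket carries the scalar weight $\prod_{k}\operatorname{Im}\langle f+\sum_{j>k}r_j\phi_{z_j,s_j},\,\phi_{z_k,s_k}\rangle$, not a single chain of factors. Bounding this in absolute value forces you to expand each inner product, which produces of order $n!$ ``link assignments'' (each vertex may attach to $f$ or to any earlier-time vertex); this $n!$ cancels the $1/n!$ coming from the time simplex, so the claimed prefactor $(C\kappa_V|t|)^n/n!$ is not what your scheme actually delivers — term-by-term you only get a geometric bound $\sim(Ct)^n$, i.e. convergence for short times only. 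Worse, in the resulting tree structure a vertex $j$ with $m$ children (plus possibly the final link to $g$) contributes $|r_j|^{m}$ or $|r_j|^{m+1}$ against a single $|\widehat{V'}(r_j)|$, so you need $\int |r|^m|\widehat{V'}(r)|\,\rd r<\infty$ for arbitrarily large $m$, whereas the hypotheses only give the first moment $\kappa_V$. You also never justify that the Dyson series converges to $\gamma_t(W(f))$; controlling the remainder requires an a priori estimate on the fully (anharmonically) evolved bracket, which in the paper is supplied by the Gronwall-type Lemma~\ref{lem:pbbd}.

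The missing idea is precisely how the paper kills the branching terms: writing $\Psi_t(s)=\{\alpha_s(\alpha^h_{t-s}(W(f))),W(g)\}$, the Leibniz rule splits $\frac{\rd}{\rd s}\Psi_t(s)$ into $i\mathcal{L}_t(s)\Psi_t(s)+\mathcal{Q}_t(s)$, where $\mathcal{L}_t(s)$ is \emph{real-valued}; the integrating factor $U_t(s)$ is then a unimodular phase, so the entire ``self-interaction'' part (exactly the inner products among the $\phi$'s that create your trees and higher $r$-powers) drops out of the norm estimate. What survives is the single inhomogeneous term, and after Fourier inversion the iteration only ever involves $\|\{\alpha_s(W(r\delta_z)),W(g)\}\|_\infty$ — a fresh single-site Weyl function under the full dynamics — with exactly one factor of $|r|$ per step (hence only $\kappa_V$) and a genuine chain $X\to z_1\to\cdots\to z_n\to Y$, which preserves the $1/n!$ and yields the bound for all $t$. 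Without this resummation-into-a-phase step (or an equivalent device), your expansion as written does not close under the stated hypotheses.
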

The assumptions on $V$ above are sufficient to imply that $V$ is bounded. For this reason, our results
do not apply to more substantial perturbations, e.g., those of the form $V_z( \mathrm{x}) = q_z^4$. 
In \cite{butta2007}, the authors do consider, for example, quartic perturbations. They prove that, for reasonable time invariant states and a set of initial conditions of full measure,
after a time $t$ local perturbations of thermal equilibrium are exponentially small in $\log^2(t)$ at a distance larger
that $t \log^{\alpha}(t)$. This is insufficient to conclude the existence of a finite velocity $v>0$ as we have 
discussed above. It is an interesting question to determine whether or not this genuinely describes the 
behavior of such systems. We will not answer this question in the present work. 

The paper is organized as follows. In Section~\ref{sec:harm}, we discuss our results concerning the
Harmonic Hamiltonian and prove Theorem~\ref{thm:harm}. Using an interpolation argument, we prove
Theorem~\ref{thm:anharm} in Section~\ref{sec:anharmss}. This result demonstrates that our locality bounds
for Weyl functions are preserved under certain single-site anharmonic perturbations. In Section~\ref{sec:anharmms}, we 
generalize Theorem~\ref{thm:anharm} to cover a wide class of multi-site perturbations. Finally,
Section~\ref{sec:app} contains a variety of useful solution estimates used throughout the paper.

%
%
%
%

\setcounter{equation}{0}

\section{The Harmonic Hamiltonian} \label{sec:harm}
The main goal of this section is to prove Theorem~\ref{thm:harm}.
For the convenience of the reader, we begin with a subsection
describing some basic features of the harmonic Hamiltonian.
In particular, we reintroduce the Hamiltonian and find an explicit 
expression for the corresponding flow. In the subsections which 
follow, we prove two locality estimates.
The first is valid for a general class of smooth and bounded observables.
The next holds for a special class of observables, the Weyl functions. 
This latter result will be particularly useful in subsequent sections. 

\subsection{Some basics} \label{subsec:sb}
For any integer $L \geq 1$, we consider subsets  $\Lambda_L = (-L, L]^{\nu} \subset \mathbb{Z}^{\nu}$ and
the finite volume harmonic Hamiltonian $H_h^{\Lambda_L}: \mathcal{X}_{\Lambda_L} \to \mathbb{R}$ given by
\begin{equation}\label{eq:HarmHam}
H_h^{\Lambda_L}( \mathrm{x}) = \sum_{x \in \Lambda_L} p_x^2 + \omega^2 q_x^2 +\sum_{j=1}^{\nu} \lambda_j (q_x - q_{x+e_j})^2.
\end{equation}
Here, for each $j = 1, \ldots, \nu$, the $e_j$ are the canonical basis vectors in $\mathbb{Z}^{\nu}$, $\omega  \geq 0$, and $\lambda_j  \geq 0$.
The model in (\ref{eq:HarmHam}) is defined with periodic boundary conditions, in the sense that $q_{x + e_j} = q_{x-(2L-1)e_j}$ if $x \in \Lambda_L$ 
but $x + e_j \notin \Lambda_L$. 

Our first task is to provide an explicit expression for the flow corresponding to (\ref{eq:HarmHam}).
In doing so, we will fix an integer value of $L \geq 1$ and drop its dependence in a variety of
quantities to ease notation. Given any $\mathrm{x} \in \mathcal{X}_{\Lambda_L}$ and $t \in \mathbb{R}$, the components 
of $\Phi_t^h( \mathrm{x}) = \{ \left( q_x(t), p_x(t) \right) \}_{x \in \Lambda_L}$ satisfy the following coupled system of differential equations:
for each $x \in \Lambda_L$ and $t \in \mathbb{R}$,
\begin{eqnarray}\label{eq:Hamiltons}
\nonumber \dot{q}_x(t) &=&  2p_x(t) \\  
\dot{p}_x(t) &=& -2\omega^2q_x(t) - 2\sum_{j =1}^{\nu} \lambda_j \left(2q_x(t) - q_{x+e_j}(t) - q_{x - e_j}(t) \right)
\end{eqnarray}
with initial condition $\{(q_x(0), p_x(0)) \}_{x \in \Lambda_L} = \mathrm{x}$.
Introducing Fourier variables, the system defined by (\ref{eq:Hamiltons}) decouples which leads to
an exact solution. This is the content of Lemma~\ref{lem:harmsol} found below.

Before stating Lemma~\ref{lem:harmsol}, it is useful to introduce some additional notation. 
Fourier sums will be defined via the set $\Lambda_L^*$ given by
$$
\Lambda_L^* = \left\{ \, \frac{x \pi}{L} \, : \, x \in \Lambda_L \, \right\} \, .
$$
Note that $\Lambda_L^* \subset (- \pi, \pi]^{\nu}$ and $| \Lambda_L^*| = | \Lambda_L| = (2L)^{\nu}$.
The following functions play an important role in our calculations. 
Suppose $\omega >0$ and take
$\gamma : \Lambda_L^* \to \mathbb{R}$ to be given by
\begin{equation} \label{eq:gamma}
\gamma(k) = \sqrt{\omega^2 + 4\sum_{j =1}^{\nu}\lambda_j \sin^2(k_j/2)} \, ,
\end{equation}
and for each $m \in \{ -1, 0, 1 \}$ and any $t \in \mathbb{R}$, set $h^{(m)}_t: \Lambda_L \to \mathbb{R}$
to be
\begin{equation}\label{eq:h}
\begin{split}
h^{(-1)}_t(x) &= {\rm Im} \left[\frac{1}{|\Lambda_L|}\sum_{k \in \Lambda_L^*}\frac{e^{i(k \cdot x-2\gamma(k)t)}}{\gamma(k)} \right],
\\
h^{(0)}_t(x) &={\rm Re}\left[\frac{1}{|\Lambda_L|}\sum_{k \in \Lambda_L^*}e^{i(k \cdot x - 2\gamma(k)t)}\right],
\\
h^{(1)}_t(x) &= {\rm Im}\left[ \frac{1}{|\Lambda_L|}\sum_{k \in \Lambda_L^*}\gamma(k) \, e^{i(k \cdot x-2\gamma(k)t)}\right] .
\end{split}
\end{equation}
Each of these functions depend on the length scale $L$, however, we are suppressing that dependence. 

\begin{lemma} \label{lem:harmsol}
Suppose $\omega >0$. For any $\mathrm{x} \in \mathcal{X}_{\Lambda_L}$ and $t \in \mathbb{R}$, the mapping $\Phi_t^h : \mathcal{X}_{\Lambda_L} \to \mathcal{X}_{\Lambda_L}$
is well-defined. In particular, for each $x \in \Lambda_L$ and $t \in \mathbb{R}$, the components of $\Phi_t^h( \mathrm{x}) = \left\{ (q_x(t), p_x(t) )\right\}_{x \in \Lambda_L}$ are
given by
\begin{equation}\label{eq:q_x(t)}
q_x(t) = \sum_{y \in \Lambda_L} q_y(0) \, h^{(0)}_t(x-y) - p_y(0) \, h^{(-1)}_t(x-y)
\end{equation}
and
\begin{equation}\label{eq:p_x(t)}
p_x(t) = \sum_{y \in \Lambda_L} q_y(0) \, h^{(1)}_t(x-y) + p_y(0) \, h^{(0)}_t(x-y).
\end{equation}
Here, if necessary, the function values $h^{(m)}_t(x-y)$ are defined by periodic extension, and we regard 
$\mathrm{x} = \{ (q_x(0), p_x(0) ) \}_{x \in \Lambda_L}$.
\end{lemma}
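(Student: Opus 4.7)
The plan is to diagonalize the linear system (\ref{eq:Hamiltons}) via the discrete Fourier transform on the torus $\Lambda_L^*$, solve the resulting decoupled harmonic oscillators explicitly, and then transform back to read off the kernels $h^{(m)}_t$.

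First I would define, for $k \in \Lambda_L^*$, the Fourier coefficients $\hat{q}(k,t) = \sum_{x \in \Lambda_L} q_x(t)e^{-ik\cdot x}$ and similarly $\hat{p}(k,t)$. Multiplying (\ref{eq:Hamiltons}) by $e^{-ik\cdot x}$ and summing over $x \in \Lambda_L$ (using periodic boundary conditions to shift the discrete Laplacian onto the exponential via $\sum_x (2q_x - q_{x+e_j} - q_{x-e_j})e^{-ik\cdot x} = 4\sin^2(k_j/2)\hat{q}(k,t)$) turns the coupled system into, for each fixed $k$,
\begin{equation*}
\dot{\hat{q}}(k,t) = 2\hat{p}(k,t), \qquad \dot{\hat{p}}(k,t) = -2\gamma(k)^2 \hat{q}(k,t),
\end{equation*}
with $\gamma(k)$ as in (\ref{eq:gamma}). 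Since $\omega>0$, $\gamma(k)>0$ for every $k$, and this uncoupled oscillator is solved by
\begin{equation*}
\hat{q}(k,t) = \cos(2\gamma(k)t)\,\hat{q}(k,0) + \frac{\sin(2\gamma(k)t)}{\gamma(k)}\hat{p}(k,0),
\end{equation*}
\begin{equation*}
\hat{p}(k,t) = -\gamma(k)\sin(2\gamma(k)t)\,\hat{q}(k,0) + \cos(2\gamma(k)t)\,\hat{p}(k,0).
\end{equation*}

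Next, I would invert the Fourier transform, writing $q_x(t) = |\Lambda_L|^{-1}\sum_{k \in \Lambda_L^*} \hat{q}(k,t)e^{ik\cdot x}$ and inserting the expansions of $\hat{q}(k,0)$, $\hat{p}(k,0)$ in terms of the initial data. This produces convolution kernels with the trigonometric factors $\cos(2\gamma(k)t)$, $\sin(2\gamma(k)t)/\gamma(k)$, and $\gamma(k)\sin(2\gamma(k)t)$. To match the form given in (\ref{eq:h}), I would use the identities $\cos\theta = \operatorname{Re}[e^{-i\theta}]$ and $\sin\theta = -\operatorname{Im}[e^{-i\theta}]$, then exploit the parity $\gamma(-k) = \gamma(k)$ together with the fact that $\Lambda_L^*$ is symmetric modulo $2\pi$ to replace $e^{-ik\cdot(x-y)}\cos(\cdots)$ by the real part of $e^{i(k\cdot(x-y) - 2\gamma(k)t)}$ (and analogously for the sine terms). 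This identifies the coefficients of $q_y(0)$ and $p_y(0)$ precisely with $h^{(0)}_t(x-y)$ and $-h^{(-1)}_t(x-y)$, yielding (\ref{eq:q_x(t)}); the same procedure applied to $\hat{p}(k,t)$ yields (\ref{eq:p_x(t)}).

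Finally, I would verify that the proposed $(q_x(t), p_x(t))$ indeed solves Hamilton's equations and matches the initial data at $t=0$. The initial conditions follow from $h^{(0)}_0(z) = \delta_{z,0}$, $h^{(\pm 1)}_0(z) = 0$ (using the Fourier inversion formula on $\Lambda_L$), while termwise time differentiation of the $h^{(m)}_t$'s produces the required relations $\partial_t h^{(0)}_t = -2h^{(1)}_t$ and $\partial_t h^{(-1)}_t = -2h^{(0)}_t$, etc. Uniqueness and the well-definedness of the flow $\Phi^h_t$ for arbitrary $\mathrm{x} \in \mathcal{X}_{\Lambda_L}$ then follows from Picard--Lindelöf applied to this finite-dimensional linear system. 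The main care point, rather than a true obstacle, is the bookkeeping in the Fourier inversion step: ensuring that the cosine/sine factors combine with the asymmetry of $\Lambda_L^*$ into the clean $\operatorname{Re}/\operatorname{Im}$ expressions of (\ref{eq:h}) rather than more cumbersome sums of complex exponentials.
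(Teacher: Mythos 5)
Your proposal is correct and follows essentially the same route as the paper: diagonalize (\ref{eq:Hamiltons}) by the discrete Fourier transform on $\Lambda_L^*$, solve the decoupled modes explicitly (the paper does this via the second-order equations and complex normal-mode coefficients $B_k$ rather than your first-order $2\times 2$ trig solution, a purely cosmetic difference), and invert to identify the kernels $h^{(m)}_t$, using the symmetry of $\Lambda_L^*$ and $\gamma(-k)=\gamma(k)$. One small slip in your final (redundant) verification step: the correct identity is $\partial_t h^{(0)}_t = 2h^{(1)}_t$, not $-2h^{(1)}_t$, but this does not affect the argument since the solution is already derived, not guessed.
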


\begin{proof}
Taking a second derivative of (\ref{eq:Hamiltons}), we find that for each $x \in \Lambda_L$ and any $t \in \mathbb{R}$,
\begin{equation}\label{eq:2dqp}
\begin{split}
\ddot{q}_x(t) &= -4\omega^2 q_x(t) - 4\sum_{j=1}^{\nu} \lambda_j \left( 2q_x(t) - q_{x+e_j}(t) - q_{x-e_j}(t) \right)
\\
\ddot{p}_x(t) &= -4\omega^2 p_x(t) - 4\sum_{j=1}^{\nu} \lambda_j \left( 2p_x(t) - p_{x+e_j}(t) - p_{x-e_j}(t) \right).
\end{split}
\end{equation}
For any $k \in \Lambda_L^*$ and $t \in \mathbb{R}$, set  
\begin{equation} \label{eq:Q+Pk}
Q_k(t) \, = \, \frac{1}{ \sqrt{ | \Lambda_L |}} \sum_{x \in \Lambda_L} e^{- i k \cdot x} q_x(t) \quad \mbox{and} \quad
P_k(t) \, = \, \frac{1}{ \sqrt{ | \Lambda_L |}} \sum_{x \in \Lambda_L} e^{- i k \cdot x} p_x(t). 
\end{equation}
Inserting (\ref{eq:2dqp}) into the second derivative of (\ref{eq:Q+Pk}),  we find an equivalent system of
uncoupled differential equations. In fact, for each $k \in \Lambda_L^*$ and any $t \in \mathbb{R}$,
\begin{equation}\label{eq:2dQP}
\begin{split}
\ddot{Q}_k(t) &= -4\omega^2 Q_k(t) - 4\sum_{j=1}^{\nu} \lambda_j \left( 2 - e^{ik_j} - e^{-ik_j} \right) Q_k(t) = - 4 \gamma(k)^2 Q_k(t)
\\
\ddot{P}_k(t) &= -4\omega^2 P_k(t) - 4\sum_{j=1}^{\nu} \lambda_j \left( 2 - e^{ik_j} - e^{-ik_j} \right)P_k(t) = - 4 \gamma(k)^2P_k(t) \, ,
\end{split}
\end{equation}
where $\gamma$ is as in (\ref{eq:gamma}). The solution of (\ref{eq:2dQP}) is given by
\begin{equation}\label{eq:QPsol}
\begin{split}
Q_k(t) &=  C_k e^{-2i\gamma(k)t} + \overline{C_{-k}} e^{2i\gamma(k)t} \\
P_k(t) &=  D_k e^{-2i\gamma(k)t} + \overline{D_{-k}} e^{2i\gamma(k)t},
\end{split}
\end{equation}
where  $-k$ is defined to be the element of $\Lambda_L^*$ whose components are given by
$$
 (-k)_j = \left\{ \begin{array}{cc} -k_j, & \text{if } |k_j|< \pi, \\ \pi , & \text{otherwise.} \end{array} \right.
$$
The relationship between the coefficients in (\ref{eq:QPsol}) above is derived using the fact that
the initial condition is real-valued, e.g., 
$$
Q_k(0) = \overline{Q_{-k}(0)} \quad \mbox{and} \quad \dot{Q}_k(0) = \overline{\dot{Q}_{-k}(0)}.
$$
Using Fourier inversion, we recover the components of the flow from (\ref{eq:QPsol}). In fact,
\begin{eqnarray} \label{eq:qsol}
q_x(t) & = & \frac{1}{ \sqrt{ | \Lambda_L |}} \sum_{k \in \Lambda_L^*} e^{i k \cdot x} Q_k(t) \nonumber \\
& = &  \frac{1}{ \sqrt{ | \Lambda_L |}} \sum_{k \in \Lambda_L^*} C_k e^{i \left( k \cdot x - 2\gamma(k)t \right)} +
\overline{C_k} e^{-i \left( k \cdot x - 2\gamma(k)t \right)} \, , 
\end{eqnarray}
and similarly, we find that
\begin{equation}
p_x(t) \, = \, \frac{1}{ \sqrt{ | \Lambda_L |}} \sum_{k \in \Lambda_L^*} D_k e^{i \left( k \cdot x - 2\gamma(k)t \right)} +
\overline{D_k} e^{-i \left( k \cdot x - 2\gamma(k)t \right)}. 
\end{equation}

To express these solutions explicitly in terms of the initial condition, we observe that
\begin{equation}\label{eq:Q_k(0)}
Q_k(0) = C_k + \overline{C_{-k}} \quad \mbox{and} \quad P_k(0) = D_k + \overline{D_{-k}} \, ,
\end{equation}
and introduce  
\begin{equation} \label{eq:beqns}
B_k \, = \, \frac{1}{ \sqrt{2 \gamma(k)}} \, P_k(0) - i \sqrt{ \frac{\gamma(k)}{2}} \, Q_k(0) \quad {\rm with} \quad
\overline{B_k} \, = \, \frac{1}{ \sqrt{2 \gamma(k)}} \, P_{-k}(0) + i \sqrt{ \frac{\gamma(k)}{2}} \, Q_{-k}(0) \, . 
\end{equation}
It is easy to see that
\begin{equation} \label{eq:Q+PBk1}
Q_k(0) \, = \, \frac{i}{ \sqrt{ 2 \gamma(k)}} \left( B_k \, - \, \overline{B_{-k}} \right) \quad \mbox{and} \quad
P_k(0) \, = \, \sqrt{ \frac{\gamma(k)}{2}} \left( B_k \, + \, \overline{B_{-k}} \right) \, ,
\end{equation}
and therefore,
\begin{equation}\label{eq:C_k=b}
C_k = \frac{i B_k}{\sqrt{2\gamma(k)}} \quad \mbox{and} \quad D_k = \sqrt{\frac{\gamma(k)}{2}} B_k \, .
\end{equation}
Plugging this into (\ref{eq:qsol}), we find that
\begin{eqnarray}\label{eq:qsol2}
q_x(t) &=& \frac{1}{\sqrt{|\Lambda_L|}} \sum_{k \in \Lambda_L^*} \frac{i B_k}{\sqrt{2\gamma(k)}} e^{i(k \cdot x - 2\gamma(k)t)} - \frac{i \overline{B_k}}{\sqrt{2\gamma(k)}}e^{-i(k \cdot x - 2\gamma(k)t)}
  \\
& = & \frac{1}{2 \sqrt{|\Lambda_L|}}\sum_{k\in\Lambda_L^*} Q_k(0) e^{i(k \cdot x - 2\gamma(k)t)} + \overline{Q_k(0)} e^{-i(k \cdot x - 2\gamma(k)t)} 
\nonumber \\
& \mbox{ } & \quad + \frac{i}{2 \sqrt{|\Lambda_L|}}\sum_{k\in\Lambda_L^*} \frac{P_k(0)}{\gamma(k)} e^{i(k \cdot x - 2\gamma(k)t)} - 
\frac{ \overline{P_k(0)}}{\gamma(k)} e^{-i(k \cdot x - 2\gamma(k)t)} 
\nonumber \\ 
& = &  \frac{1}{ \sqrt{|\Lambda_L|}}  \sum_{k\in\Lambda_L^*} {\rm Re} \left[ Q_k(0) e^{i(k \cdot x - 2\gamma(k)t)} \right] -  
\frac{1}{\sqrt{|\Lambda_L|}}\sum_{k\in\Lambda_L^*} {\rm Im} \left[ \frac{P_k(0)}{\gamma(k)} e^{i(k \cdot x - 2\gamma(k)t)} \right] \, . \nonumber
\end{eqnarray}
Moreover, one finds that
\begin{equation}
{\rm Re} \left[ Q_k(0) e^{i(k \cdot x - 2\gamma(k)t)} \right] = \frac{1}{ \sqrt{| \Lambda_L|}} \sum_{y \in \Lambda_L} q_y(0) 
{\rm Re} \left[ e^{i(k \cdot (x-y) - 2\gamma(k)t)} \right]
\end{equation}
while
\begin{equation}
{\rm Im} \left[ \frac{P_k(0)}{\gamma(k)} e^{i(k \cdot x - 2\gamma(k)t)} \right] = \frac{1}{ \sqrt{| \Lambda_L|}} \sum_{y \in \Lambda_L} p_y(0) 
{\rm Im} \left[ \frac{1}{\gamma(k)} e^{i(k \cdot (x-y) - 2\gamma(k)t)} \right] \, .
\end{equation}
With the functions $h_t^{(m)}$, as defined in (\ref{eq:h}), we conclude that
\begin{equation}
q_x(t) = \sum_{y \in \Lambda_L} q_y(0) h_t^{(0)}(x-y) - p_y(0) h_t^{(-1)}(x-y),
\end{equation}
as claimed in (\ref{eq:q_x(t)}). A similar calculation yields (\ref{eq:p_x(t)}).
Since the functions $h_t^{(m)}$ are real valued, so too are the solutions 
$q_x(t)$ and $p_x(t)$. This proves Lemma~\ref{lem:harmsol}.
\end{proof}

\begin{remark} An analogue of (\ref{eq:q_x(t)}) and (\ref{eq:p_x(t)}) holds in the event that $\omega =0$.
This is seen by proceeding as in the proof of Lemma~\ref{lem:harmsol} and observing that now
$\gamma(0) = 0$, but $\gamma(k) \neq 0$ for $k \neq 0$. For $k \neq 0$, the formulas above are correct, and
a simple calculation shows that, in this case,
\begin{equation}\label{eq:QPsolsw=0}
\begin{split}
Q_0(t) & =  Q_0(0) + 2P_0(0) t \\
P_0(t) &=  P_0(0),
\end{split}
\end{equation}
similar to (\ref{eq:QPsol}). One easily sees that the equations (\ref{eq:q_x(t)}) and (\ref{eq:p_x(t)}) still hold
with the convention that
\begin{equation}
h^{(-1)}_t(x) = - \frac{2t}{| \Lambda_L|} +  {\rm Im} \left[\frac{1}{|\Lambda_L|}\sum_{k \in \Lambda_L^* \setminus \{ 0 \}}\frac{e^{i(k \cdot x-2\gamma(k)t)}}{\gamma(k)} \right] \, .
\end{equation}
\end{remark}

We end this subsection with the following crucial estimate which was proven in \cite{harm}.
\begin{lemma}\label{lem:htx}
Fix $L\geq 1$ and consider the functions $h_t^{(m)}$ as defined in (\ref{eq:h}) for $m \in \{-1,0,1 \}$. 
For any $\mu >0$, the bounds
\begin{equation}
\begin{split}
\left| h_t^{(0)}(x) \right| &\leq  e^{-\mu \left( |x| - c_{\omega,\lambda} \max \left( \frac{2}{\mu} \, , \, e^{(\mu/2)+1}\right) |t| \right)}
\\
\left| h_t^{(-1)}(x) \right| &\le  c^{-1}_{\omega,\lambda}e^{-\mu \left( |x| - c_{\omega,\lambda} \max \left( \frac{2}{\mu} \, , \, e^{(\mu/2)+1}\right) |t| \right)}
\\
\left| h_t^{(1)}(x) \right| &\le c_{\omega,\lambda}e^{\mu/2}e^{-\mu \left( |x| - c_{\omega,\lambda} \max \left( \frac{2}{\mu} \, , \, e^{(\mu/2)+1}\right) |t| \right)}
\end{split}
\end{equation}
hold for all $t \in \mathbb{R}$ and $x \in \Lambda_L$. Here  $|x| = \sum_{j=1}^{\nu} |x_i|$ and one may take $c_{\omega,\lambda} = (\omega^2 + 4 \sum_{j=1}^{\nu} \lambda_j )^{1/2}$. 
\end{lemma}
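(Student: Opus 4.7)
The plan is to exploit analyticity of the symbol $\gamma(k)^{m} e^{-2i\gamma(k)t}$ in a complex strip around the real torus: the decay in $|x|$ will come from shifting the Fourier contour into the imaginary direction, and the growth of $\gamma$ under that shift will determine the propagation velocity. The first step is to verify that $\gamma$ admits a holomorphic extension. Since $\gamma(k)^{2} = \omega^{2} + 2\sum_{j=1}^{\nu} \lambda_j(1-\cos k_j)$ is entire and bounded below by $\omega^{2} > 0$ on $\mathbb{R}^{\nu}$, the principal branch of the square root does the job, and the identity $|\sin((k_j+i\eta_j)/2)|^{2} = \sin^{2}(k_j/2) + \sinh^{2}(\eta_j/2)$ combined with $4\sinh^{2}(s/2) \le e^{|s|}$ gives the uniform symbol bound
$$\sup_{k\in\mathbb{R}^{\nu}} \bigl|\gamma(k+i\eta)\bigr| \, \le \, c_{\omega,\lambda}\, e^{|\eta|_{\infty}/2},$$
together with the sharper small-$\eta$ estimate $|\gamma(k+i\eta)| \le c_{\omega,\lambda}(1+O(|\eta|^{2}))$ coming from $\sinh^{2}(\eta/2) \sim \eta^{2}/4$.

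Next I would implement the contour shift. Passing to the continuous Fourier integral $\tilde h_t^{(m)}(x) := (2\pi)^{-\nu} \int_{(-\pi,\pi]^{\nu}} \gamma(k)^{m} e^{i(k\cdot x - 2\gamma(k)t)}\,dk$, the integrand is $2\pi$-periodic in each $k_j$ (since $x \in \mathbb{Z}^{\nu}$) and analytic in a complex neighborhood of $\mathbb{R}^{\nu}$, so Cauchy's theorem applied coordinate by coordinate, with side contributions cancelling by periodicity, yields, for any real $\eta$,
$$\tilde h_t^{(m)}(x) \, = \, e^{-\eta\cdot x}\,\frac{1}{(2\pi)^{\nu}} \int_{(-\pi,\pi]^{\nu}} \gamma(k+i\eta)^{m}\, e^{ik\cdot x - 2i\gamma(k+i\eta)t}\,dk.$$
Taking $\eta_j = \mu\,\mathrm{sgn}(x_j)$ produces the decay factor $e^{-\mu|x|}$ and, via Step~1, the remaining integrand is dominated by $(c_{\omega,\lambda} e^{\mu/2})^{m} \exp(2 c_{\omega,\lambda} e^{\mu/2}|t|)$. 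Rearranging into the form $C_m\, e^{-\mu(|x|-v_h|t|)}$ gives $v_h = 2c_{\omega,\lambda} e^{\mu/2}/\mu$, which is bounded above by $c_{\omega,\lambda} \max(2/\mu,\, e^{\mu/2+1})$ upon splitting into the regimes $\mu \le 2/e$ (where the sharper small-$\eta$ bound recovers the $2/\mu$ branch) and $\mu \ge 2/e$ (where the uniform bound recovers the $e^{\mu/2+1}$ branch).

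The passage from $\tilde h_t^{(m)}$ back to the finite-$L$ sum $h_t^{(m)}$ is handled by Poisson summation: the aliasing error $\sum_{0 \ne n \in (2L)\mathbb{Z}^{\nu}} \tilde h_t^{(m)}(x+n)$ obeys the same continuum bound at spatial arguments of $\ell^{1}$-norm at least $L \ge |x|$, so in the useful regime $|x| > v_h|t|$ this correction is controlled by the same exponential as the main term, while in the complementary regime the claim is trivially absorbed by the uniform estimates $|h_t^{(0)}| \le 1$ and the corresponding norm bounds on $h_t^{(\pm 1)}$. The main technical obstacle is the careful bookkeeping that matches both branches of $\max(2/\mu,\, e^{\mu/2+1})$ and the stated prefactors $c_{\omega,\lambda}^{m} e^{m\mu/2}$; all constants remain independent of $L$ because the symbol estimate of Step~1 does not involve $L$ and the aliasing decays exponentially in $L$.
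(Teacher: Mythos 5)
Your overall strategy (complex deformation of the Fourier variable, with Poisson summation to pass between the finite lattice sum and a continuum integral) can in principle be made to work, but the proof as written fails at its first step. The claim that ``the principal branch of the square root does the job'' because $\gamma(k)^2\ge\omega^2>0$ on $\mathbb{R}^\nu$ is not true on the shifted contour: the lemma is asserted for every $\mu>0$, and with $|\eta_j|=\mu$ one has, e.g.\ at $k=0$, $\gamma^2(i\eta)=\omega^2-4\sum_j\lambda_j\sinh^2(\eta_j/2)$, which is $\le 0$ as soon as $4\sum_j\lambda_j\sinh^2(\mu/2)\ge\omega^2$. Thus $\gamma^2(k+i\eta)$ meets the branch cut (and vanishes) inside the strip, $\gamma(k)^m e^{-2i\gamma(k)t}$ is not holomorphic there, and the coordinatewise Cauchy shift is not justified; worse, for $m=-1$ the shifted integrand is not even bounded near the zeros of $\gamma^2$. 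The repair is to pair $k$ with $-k$ first and work with $\cos(2\gamma(k)t)$, $\gamma(k)^{-1}\sin(2\gamma(k)t)$ and $\gamma(k)\sin(2\gamma(k)t)$, which are entire functions of $\gamma^2(k)$ and hence of $k$, and are controlled via $|\cos(2t\sqrt z)|\le\cosh(2|t|\,|z|^{1/2})$ and its analogues; that is also what produces the stated prefactors $1$, $c_{\omega,\lambda}^{-1}$, $c_{\omega,\lambda}e^{\mu/2}$.

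Second, the constant bookkeeping does not close as described. On the shifted contour $\sup_k|\gamma(k+i\eta)|\ge(\omega^2+4\sum_j\lambda_j\cosh^2(\mu/2))^{1/2}>c_{\omega,\lambda}$ (the supremum over real $k$ sits at $k_j=\pi$, where the shift increases $|\sin|$), so no ``sharper small-$\eta$ estimate'' of the form $c_{\omega,\lambda}(1+O(\mu^2))$ can yield the $2/\mu$ branch of $\max(2/\mu,\,e^{(\mu/2)+1})$; your crude bound gives velocity $2c_{\omega,\lambda}e^{\mu/2}/\mu$, which exceeds the claimed one for all $\mu<2/e$, not just at a borderline value. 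The proposed fallback inside the cone also fails for $h^{(-1)}$: its trivial uniform bound is $1/\omega$ (from $\gamma(k)\ge\omega$), not $1/c_{\omega,\lambda}$, so the regime $|x|\le v_h|t|$ is not ``trivially absorbed'' for that function; and the aliasing sum is not free either, since for $x$ near the boundary of $\Lambda_L$ the nearest aliases contribute terms of the same size as the main one. Note finally that the paper itself does not prove this lemma — it quotes Lemma~3.7 of \cite{harm} — and the argument there sidesteps all of these issues: one expands $\cos(2\gamma t)$ and $\gamma^{\pm1}\sin(2\gamma t)$ in powers of $t$, observes that $\gamma(k)^{2n}$ is a trigonometric polynomial whose Fourier modes lie in the $\ell^1$-ball of radius $n$, so only terms with $n\ge|x|$ survive the finite lattice sum (no Poisson summation needed), and then estimates the tail, which is exactly where the two branches of the velocity and the exact prefactors come from. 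To salvage your route you would need the even-function fix, a shift parameter decoupled from $\mu$ (optimized together with the trivial bounds to reach the $2/\mu$ branch), and a genuine treatment of the aliasing error.
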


We refer the interested reader to Lemma 3.7 of \cite{harm} for the proof. 
Moreover, we stress that Lemma~\ref{lem:htx} is valid for all $\omega \geq 0$.

%
%
%
%
%

\subsection{A general locality estimate} \label{subsec:genloc}

Our first locality bound for the harmonic Hamiltonian follows directly from Lemma~\ref{lem:harmsol} and Lemma~\ref{lem:htx}.
We state this as Theorem~\ref{thm:genharmbd} below. As we will see, Theorem~\ref{thm:harm} is an immediate consequence of Theorem~\ref{thm:genharmbd}.
Recall that we have defined $\mathcal{A}_{\Lambda_L}^{(1)}$ to be the set of observables $A \in \mathcal{A}_{\Lambda_L}$ for which:
given any $x \in \Lambda_L$, $\frac{\partial A}{\partial q_x} \in \mathcal{A}_{\Lambda_L}$, 
$\frac{\partial A}{\partial p_x} \in \mathcal{A}_{\Lambda_L}$, and
\begin{equation}
\| \partial A \|_{\infty} = \sup_{x \in \Lambda_L} \max \left( \left\| \frac{\partial A}{\partial q_x} \right\|_{\infty}, \left\| \frac{\partial A}{\partial p_x} \right\|_{\infty} \right) \, < \, \infty \, .
\end{equation}
\begin{thm} \label{thm:genharmbd}
Let $X$ and $Y$ be finite subsets of $\mathbb{Z}^{\nu}$ and take $L_0$ to be the smallest integer such that $X, Y \subset \Lambda_{L_0}$.
For any $L \geq L_0$, let $\alpha_t^{h,L}$ denote the dynamics corresponding to $H_h^{\Lambda_L}$. 
For any $\mu >0$ and any observables $A,B \in \mathcal{A}_{\Lambda_{L_0}}^{(1)}$ with 
support of $A$ in $X$ and support of $B$ in $Y$, the bound 
\begin{equation}\label{eq:thmbd2}
\left\| \left\{ \alpha_t^{h,L}(A), B \right\} \right\|_{\infty} \le C \| \partial A\|_{\infty} \| \partial B \|_{\infty} \sum_{x\in X, y \in Y} e^{-\mu\left(d(x,y) - c_{\omega, \lambda} \max (\frac{2}{\mu}, e^{(\mu/2) + 1})|t| \right)}
\end{equation}
holds for all $t \in \mathbb{R}$.  Here 
\begin{equation}\label{eq:d}
d(x,y) = \sum_{j=1}^{\nu} \min_{\eta_{j} \in \mathbb{Z}} |x_{j} - y_{j} + 2L \, \eta_{j}| 
\end{equation} 
is the distance on the torus and the constants may be taken as $C = (2 + c_{\omega, \lambda}e^{\mu/2} + c_{\omega, \lambda}^{-1})$ with $c_{\omega, \lambda} = (\omega^2 + 4\sum_{j = 1}^{\nu}\lambda_j)^{1/2}$.
\end{thm}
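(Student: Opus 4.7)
The plan is to reduce this to a direct computation using the explicit flow formulas from Lemma~\ref{lem:harmsol} together with the decay estimates in Lemma~\ref{lem:htx}. By the chain rule applied to $\alpha_t^{h,L}(A)(\mathrm{x}) = A(\Phi_t^h(\mathrm{x}))$ and using the explicit expressions (\ref{eq:q_x(t)}) and (\ref{eq:p_x(t)}), I first read off the Jacobian of the flow:
\begin{equation*}
\frac{\partial q_x(t)}{\partial q_y(0)} = h_t^{(0)}(x-y), \quad \frac{\partial q_x(t)}{\partial p_y(0)} = -h_t^{(-1)}(x-y), \quad \frac{\partial p_x(t)}{\partial q_y(0)} = h_t^{(1)}(x-y), \quad \frac{\partial p_x(t)}{\partial p_y(0)} = h_t^{(0)}(x-y).
\end{equation*}
Hence for any $z \in \Lambda_L$,
\begin{equation*}
\frac{\partial \alpha_t^{h,L}(A)}{\partial q_z}(\mathrm{x}) = \sum_{x \in \Lambda_L} \left[ \frac{\partial A}{\partial q_x}(\Phi_t^h(\mathrm{x})) h_t^{(0)}(x-z) + \frac{\partial A}{\partial p_x}(\Phi_t^h(\mathrm{x})) h_t^{(1)}(x-z) \right],
\end{equation*}
and an analogous expression for $\partial \alpha_t^{h,L}(A)/\partial p_z$ involving $-h_t^{(-1)}(x-z)$ and $h_t^{(0)}(x-z)$.

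Next I plug these into the definition of the Poisson bracket. Since $A$ is supported in $X$, only terms with $x \in X$ survive in the inner sum, and since $B$ is supported in $Y$, only $z \in Y$ contributes to the outer sum. This produces four terms, each of which is a product of a derivative of $A$ (evaluated at $\Phi_t^h(\mathrm{x})$), a function $h_t^{(m)}(x-z)$, and a derivative of $B$ (evaluated at $\mathrm{x}$). Taking absolute values, bounding the derivatives of $A$ and $B$ by $\|\partial A\|_\infty$ and $\|\partial B\|_\infty$ respectively, and applying the triangle inequality gives
\begin{equation*}
\left| \left\{ \alpha_t^{h,L}(A), B \right\}(\mathrm{x}) \right| \le \| \partial A \|_\infty \| \partial B \|_\infty \sum_{\substack{x \in X \\ z \in Y}} \left( 2|h_t^{(0)}(x-z)| + |h_t^{(1)}(x-z)| + |h_t^{(-1)}(x-z)| \right).
\end{equation*}

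Finally I apply Lemma~\ref{lem:htx} to each term. Because the functions $h_t^{(m)}$ are defined on $\Lambda_L$ and extended periodically, the argument $x-z$ must be interpreted in the fundamental domain, which replaces $|x-z|$ by the torus distance $d(x,z)$ from (\ref{eq:d}). The three prefactors $1$, $c_{\omega,\lambda} e^{\mu/2}$, and $c_{\omega,\lambda}^{-1}$ from Lemma~\ref{lem:htx} combine with the factor of $2$ in front of $|h_t^{(0)}|$ to give the constant $C = 2 + c_{\omega,\lambda} e^{\mu/2} + c_{\omega,\lambda}^{-1}$, while the common exponential decay rate $c_{\omega,\lambda}\max(2/\mu, e^{\mu/2+1})$ becomes the velocity appearing in the bound. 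Taking the supremum over $\mathrm{x}$ yields (\ref{eq:thmbd2}).

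The proof is essentially bookkeeping once the Jacobian of the harmonic flow is identified with the functions $h_t^{(m)}$. The one point requiring a little care is that the estimates in Lemma~\ref{lem:htx} are phrased for $x$ in the fundamental domain $\Lambda_L$, so when passing from $h_t^{(m)}(x-z)$ (with its implicit periodic extension) to an exponential in the separation, one must interpret the separation as the torus distance $d(x,z)$ rather than the Euclidean distance in $\mathbb{Z}^\nu$; the $\omega=0$ case also requires invoking the modified definition of $h_t^{(-1)}$ from the remark following Lemma~\ref{lem:harmsol}, but the same decay estimate continues to hold up to the stated constants.
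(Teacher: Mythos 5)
Your proposal is correct and follows essentially the same route as the paper: compute the Poisson bracket via the chain rule, identify the Jacobian of the harmonic flow with the kernels $h_t^{(m)}$ from Lemma~\ref{lem:harmsol}, arrive at the intermediate bound $\|\partial A\|_\infty\|\partial B\|_\infty\sum_{x\in X,y\in Y}\bigl(2|h_t^{(0)}|+|h_t^{(1)}|+|h_t^{(-1)}|\bigr)$, and conclude with Lemma~\ref{lem:htx}, giving exactly the constant $C=2+c_{\omega,\lambda}e^{\mu/2}+c_{\omega,\lambda}^{-1}$. Your added remarks about interpreting $x-y$ via the torus distance and about the $\omega=0$ modification are consistent with the paper's conventions and do not change the argument.
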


\begin{proof}
The Poisson bracket is easy to calculate. In fact, for any $\mathrm{x} \in \mathcal{X}_{\Lambda_L}$,
\begin{equation} \label{eq:genbd1}
\left[ \left\{ \alpha_t^{h,L} \left(A \right), B \right\} \right]( \mathrm{x})  =  \sum_{y \in Y} \frac{ \partial}{ \partial q_y} A \left( \Phi_t^{h,L}( \mathrm{x}) \right) \cdot \frac{ \partial B}{ \partial p_y}( \mathrm{x}) \, - \,  
\frac{ \partial}{ \partial p_y} A \left( \Phi_t^{h,L}( \mathrm{x}) \right) \cdot \frac{ \partial B}{ \partial q_y}( \mathrm{x}) \, .
\end{equation}
By the chain rule,
\begin{equation}
\frac{ \partial}{ \partial q_y} A \left( \Phi_t^{h,L}( \mathrm{x}) \right) = \sum_{x\in X} \frac{ \partial A}{ \partial q_x}\left( \Phi_t^{h,L}( \mathrm{x}) \right) \cdot \frac{ \partial q_x}{ \partial q_y}(t) \, + \,  
\frac{ \partial A}{ \partial p_x}\left( \Phi_t^{h,L}( \mathrm{x}) \right) \cdot \frac{ \partial p_x}{ \partial q_y}(t) \,
\end{equation}
and a similar formula holds for $\frac{ \partial}{ \partial p_y} A \left( \Phi_t^{h,L}( \mathrm{x}) \right) $. Now estimating (\ref{eq:genbd1}), we find that
\begin{equation}
\left\| \left\{ \alpha_t^{h,L} \left(A \right), B \right\} \right\|_{\infty} \, \leq \, \| \partial A \|_{\infty} \| \partial B \|_{\infty} \sum_{x \in X, y \in Y} \left| h_t^{(-1)}(x-y) \right| \, + \, 2 \, \left| h_t^{(0)}(x-y) \right|
\, + \, \left| h_t^{(1)}(x-y) \right|,
\end{equation}
using Lemma~\ref{lem:harmsol}. The bound in (\ref{eq:thmbd2}) now follows from Lemma~\ref{lem:htx}.
\end{proof}

{F}rom Theorem~\ref{thm:genharmbd}, and specifically the bound (\ref{eq:thmbd2}), we see that for any $\mu >0$, the
harmonic velocity $v_{\rm h}$ is essentially described by
\begin{equation} \label{eq:defvh}
v_{\rm h}( \mu) = c_{\omega, \lambda} \max \left( \frac{2}{\mu}, e^{(\mu/2)+1} \right) \, .
\end{equation} 
In fact, given (\ref{eq:thmbd2}) for some $\mu >0$, it is easy to see that for any $0 < \epsilon < 1$,
\begin{equation}
\sum_{x \in X, y \in Y} e^{- \mu d(x,y)} \leq e^{- \epsilon \mu d(X,Y)} \, \min \left(|X|, |Y| \right) \, \sum_{z \in \Lambda_L} e^{- \mu (1- \epsilon) d(0,z)} \, ,
\end{equation}
where we have set $d(X,Y) = \min_{x \in X, y \in Y}d(x,y)$. Thus, Theorem~\ref{thm:harm} is a simple consequence of Theorem~\ref{thm:genharmbd}. It is interesting to note that for any
$L$ the quantity
\begin{equation}
\sum_{z \in \Lambda_L} e^{- \mu (1- \epsilon) d(0,z)} \leq \sum_{z \in \mathbb{Z}^{\nu}} e^{- \mu (1- \epsilon) |z|},
\end{equation}
where $|z|$ denotes the $L^1$-metric on $\mathbb{Z}^{\nu}$. Given this and the fact that, for sufficiently large $L$, the distance
$d(X,Y)$ agrees with the $L^1$-distance between $X$ and $Y$, it is clear that the estimate proven in
Theorem~\ref{thm:harm} is genuinely independent of the length scale $L$.

Since the bounds are valid for any $\mu >0$, Theorem~\ref{thm:genharmbd} demonstrates arbitrarily fast 
exponential decay in space with a velocity that depends on $\mu$. Typically, however, one is interested in the best possible estimates
on $v_{\rm h}$ given some decay rate. In this sense, the optimal harmonic velocity, as described by
(\ref{eq:defvh}), occurs when the equation
\begin{equation} \label{eq:optv}
\frac{\mu}{2} = e^{(\mu/2)+1}
\end{equation}
holds. It is easy to see that the solution to (\ref{eq:optv}), denoted by $\mu_0$, satisfies $1/2 < \mu_0 <1$,
and therefore the corresponding velocity $v_{\rm h}( \mu_0) \leq 4 c_{\omega, \lambda}$.

%
%
%
%

\subsection{The harmonic evolution of Weyl functions} \label{subsec:weyl}
In preparation for our arguments in Sections \ref{sec:anharmss} and \ref{sec:anharmms}, we will
now present a different proof of our locality result, analogous to Theorem~\ref{thm:genharmbd}, 
valid for Weyl functions. Recall that a Weyl function is an observable, generated by a function
 $f : \Lambda_L \to \mathbb{C}$, with the form
 \begin{equation} \label{eq:defweyl}
 [W(f)]( \mathrm{x}) \, = \, \mbox{exp} \left[ \, i \, \sum_{x \in \Lambda_L} \mbox{Re}[f(x)] q_x + \mbox{Im}[f(x)] p_x \, \right] \, .
 \end{equation}
 
One important property of the Weyl functions is typically referred to as the Weyl relation.
We state this as Proposition~\ref{prop:wr}.
\begin{prop}[Weyl Relation] \label{prop:wr}
Let $f,g : \Lambda_L \to \mathbb{C}$. We have that
\begin{equation}\label{eq:ip}
\{ W(f), W(g) \} = -{\rm Im}[ \langle f, g \rangle ] W(f) W(g).
\end{equation}
where the inner product is taken in $\ell^2(\Lambda_L)$.
\end{prop}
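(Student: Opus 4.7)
The identity is a direct calculation, so my plan is simply to expand both sides and match them. First I would decompose the generating functions into real and imaginary parts, writing $f(x) = a(x) + i b(x)$ and $g(x) = c(x) + i d(x)$ with $a,b,c,d : \Lambda_L \to \mathbb{R}$. Then the Weyl function takes the explicit form
\begin{equation*}
W(f)(\mathrm{x}) = \exp\left[i \sum_{x \in \Lambda_L} \bigl(a(x) q_x + b(x) p_x \bigr)\right],
\end{equation*}
which is smooth and bounded, so the Poisson bracket is well defined pointwise.

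Next I would compute the partial derivatives. Since $W(f)$ is a pure exponential, differentiation pulls down a multiplicative factor:
\begin{equation*}
\frac{\partial W(f)}{\partial q_x} = i a(x) \, W(f), \qquad \frac{\partial W(f)}{\partial p_x} = i b(x) \, W(f),
\end{equation*}
and the analogous identities hold for $W(g)$ with $(c,d)$ in place of $(a,b)$. Inserting these into the definition of the Poisson bracket and factoring out $W(f) W(g)$ gives
\begin{equation*}
\{W(f), W(g)\} = -\sum_{x \in \Lambda_L} \bigl(a(x) d(x) - b(x) c(x)\bigr) \, W(f) \, W(g).
\end{equation*}

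Finally, I would identify the scalar prefactor with $-\mathrm{Im}[\langle f, g\rangle]$. Using the convention $\langle f, g \rangle = \sum_x \overline{f(x)} g(x)$ in $\ell^2(\Lambda_L)$, a direct expansion yields
\begin{equation*}
\langle f, g \rangle = \sum_{x} \bigl(a(x) c(x) + b(x) d(x)\bigr) + i \sum_{x} \bigl(a(x) d(x) - b(x) c(x)\bigr),
\end{equation*}
so its imaginary part is exactly $\sum_x (a(x) d(x) - b(x) c(x))$, matching the factor above.

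There is no real obstacle here; the only thing to watch is the convention for the inner product (whether the conjugate sits on $f$ or $g$) and the sign in the Poisson bracket, both of which conspire to produce the minus sign on the right-hand side. Once the bookkeeping of real and imaginary parts is handled carefully, the identity is immediate.
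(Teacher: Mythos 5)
Your proposal is correct and follows essentially the same direct computation as the paper: differentiate the Weyl exponentials to pull down $i\,\mathrm{Re}[f(x)]$ and $i\,\mathrm{Im}[f(x)]$ factors, assemble the Poisson bracket, and identify the resulting scalar with $-\mathrm{Im}[\langle f,g\rangle]$ under the convention that the conjugate sits on $f$. The only cosmetic difference is your use of the letters $a,b,c,d$ in place of real and imaginary parts.
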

\begin{proof}
A direct calculation yields
\begin{eqnarray*}
\{ W(f), W(g) \} &=&\sum_{x\in \Lambda_L} \frac{\partial}{\partial q_x}W(f) \,  \frac{\partial}{\partial p_x} W(g) -  \frac{\partial}{\partial p_x} W(f) \,  \frac{\partial}{\partial q_x} W(g)
\\
&=& \sum_{x\in \Lambda_L} (-\text{Re}[ f(x) ] \text{Im}[ g(x) ] + \text{Im}[ f(x) ]\text{Re}[ g(x) ])W(f)W(g).
\end{eqnarray*}
Noting that 
\begin{eqnarray*}
\text{Im}[\langle f, g \rangle ] &=& \text{Im}\big[\sum_{x\in \Lambda_L} \overline{f(x)} g(x) \big]
\\
&=& \sum_{x\in \Lambda_L}(-\text{Im}[ f(x) ] \text{Re}[ g(x) ] + \text{Re}[ f(x) ]\text{Im}[ g(x) ])
\end{eqnarray*}
proves the proposition.
\end{proof}

Another useful property of the Weyl functions is that the harmonic dynamics leaves this class of observables invariant. 
This important fact, which follows immediately from Lemma~\ref{lem:harmsol}, is the content of the next proposition.
Before stating this, it is convenient to introduce notation for the convolution of two functions $f,g: \Lambda_L \to \mathbb{C}$,
\begin{equation}\label{def:conv}
(f*g)(x) = \sum_{y \in \Lambda_L} f(y) g(x-y),
\end{equation}
where, if necessary, $g(x-y)$ is calculated by periodic extension.
\begin{prop}\label{prop:f_t}
Let $f: \Lambda_L \to \mathbb{C}$ and take $t \in \mathbb{R}$. 
\begin{equation}\label{eq:W(f_t)}
\alpha_t^{h,L}(W(f)) = W(f_t)  \, ,
\end{equation}
where 
\begin{equation} \label{eq:defft}
f_t = f * \overline{ \left(h_t^{(0)} + \frac{i}{2}(h_t^{(-1)} + h_t^{(1)}) \right)} + \overline{f}*\left( \frac{i}{2}(h_t^{(1)} - h_t^{(-1)}) \right).
\end{equation}
with $h_t^{(-1)}, h_t^{(0)},$ and $h_t^{(1)}$ as in (\ref{eq:h}).
\end{prop}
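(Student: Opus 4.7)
The plan is to compute $\alpha_t^{h,L}(W(f))(\mathrm{x})$ directly from its definition, plug in the explicit formulas for $q_x(t)$ and $p_x(t)$ provided by Lemma~\ref{lem:harmsol}, reorganize the resulting sum so that it reads as a linear functional of the initial coordinates $\{q_y(0),p_y(0)\}$, and finally read off a function $f_t$ for which this linear functional is exactly the exponent of $W(f_t)$.

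Concretely, by the definition of the dynamics,
\[
\alpha_t^{h,L}(W(f))(\mathrm{x})=\exp\Bigl[\,i\sum_{x\in\Lambda_L}\mathrm{Re}[f(x)]\,q_x(t)+\mathrm{Im}[f(x)]\,p_x(t)\,\Bigr],
\]
and substituting (\ref{eq:q_x(t)})--(\ref{eq:p_x(t)}) collects, at each site $y\in\Lambda_L$, a coefficient of $q_y(0)$ equal to
\[
u(y):=\sum_{x\in\Lambda_L}\mathrm{Re}[f(x)]\,h^{(0)}_t(x-y)+\mathrm{Im}[f(x)]\,h^{(1)}_t(x-y),
\]
and a coefficient of $p_y(0)$ equal to
\[
v(y):=\sum_{x\in\Lambda_L}-\mathrm{Re}[f(x)]\,h^{(-1)}_t(x-y)+\mathrm{Im}[f(x)]\,h^{(0)}_t(x-y).
\]
Because each $h^{(m)}_t$ is real-valued (as shown in the proof of Lemma~\ref{lem:harmsol}), both $u(y)$ and $v(y)$ are real. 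Defining $f_t:\Lambda_L\to\mathbb{C}$ by $f_t(y):=u(y)+i\,v(y)$, this gives immediately
\[
\alpha_t^{h,L}(W(f))=W(f_t),
\]
which establishes (\ref{eq:W(f_t)}).

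It only remains to match $f_t$ with the convolution expression (\ref{eq:defft}). A brief computation using that each $h^{(m)}_t$ is real and even in its argument (easily seen from (\ref{eq:h}) since $\gamma(-k)=\gamma(k)$) converts the sums defining $u$ and $v$ into convolutions, giving
\[
f_t=\mathrm{Re}[f]*\bigl(h^{(0)}_t-i\,h^{(-1)}_t\bigr)+i\,\mathrm{Im}[f]*\bigl(h^{(0)}_t-i\,h^{(1)}_t\bigr).
\]
Writing $\mathrm{Re}[f]=\tfrac12(f+\overline f)$ and $\mathrm{Im}[f]=\tfrac{1}{2i}(f-\overline f)$ and regrouping by $f$ and $\overline f$ then produces precisely
\[
f_t=f*\overline{\Bigl(h^{(0)}_t+\tfrac{i}{2}(h^{(-1)}_t+h^{(1)}_t)\Bigr)}+\overline f *\Bigl(\tfrac{i}{2}(h^{(1)}_t-h^{(-1)}_t)\Bigr),
\]
which is (\ref{eq:defft}). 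The only real piece of content is the substitution step from Lemma~\ref{lem:harmsol}; everything else is bookkeeping, and no analytic obstacle is anticipated. The one minor point to watch is the periodic extension convention for $h^{(m)}_t(x-y)$, which must match the convolution convention (\ref{def:conv}), but since both are stated with periodic extension in place this is automatic.
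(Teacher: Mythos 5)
Your proposal is correct and follows essentially the same route as the paper: substitute the explicit flow from Lemma~\ref{lem:harmsol} into the exponent of $W(f)$, regroup by the coefficients of $q_y(0)$ and $p_y(0)$, and identify the result as $W(f_t)$. The only difference is that you spell out the bookkeeping (realness and evenness of the $h_t^{(m)}$, and the regrouping into the convolution form (\ref{eq:defft})) that the paper leaves implicit, and these checks are carried out correctly.
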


\begin{proof}
For any point $\mathrm{x} \in \mathcal{X}_{\Lambda_L}$, we have that
\begin{eqnarray}\label{eq:W(f,t)}
\left[ \alpha_t^{h,L}(W(f))\right](\mathrm{x}) &=& \exp \left( i\sum_{x \in \Lambda_L} \text{Re}[ f(x) ]q_x(t) + \text{Im}[ f(x) ]p_x(t) \right) \nonumber
\\
&=& \exp \left( i\sum_{x \in \Lambda_L} \text{Re}[ f(x) ] \sum_{y \in \Lambda_L} q_y(0) \, h^{(0)}_t(x-y) - p_y(0) \, h^{(-1)}_t(x-y) \right) 
\nonumber \\
& \mbox{ } & \quad \times \exp \left(i \sum_{x \in \Lambda_L} \text{Im}[ f(x) ]\sum_{y \in \Lambda_L} q_y(0) \, h^{(1)}_t(x-y) + p_y(0) \, h^{(0)}_t(x-y) \right)
 \nonumber
\\
&=& \exp \left( i \sum_{y \in \Lambda_L} q_y(0) \sum_{x \in \Lambda_L} \text{Re}[ f(x) ]  h^{(0)}_t(x-y) + \text{Im}[ f(x) ] \, h^{(1)}_t(x-y) \right) 
\nonumber \\
& \mbox{ } & \quad \times \exp \left(i \sum_{y \in \Lambda_L} p_y(0) \sum_{x \in \Lambda_L} \text{Im}[ f(x) ] \, h^{(0)}_t(x-y)  - \text{Re}[ f(x) ] h^{(-1)}_t(x-y) \right)
\nonumber \\
& = & \left[ W(f_t) \right]( \mathrm{x}),
\end{eqnarray}
where we have defined the function $f_t : \Lambda_L \to \mathbb{C}$ by (\ref{eq:defft}).
\end{proof}

It is obvious that Theorem~\ref{thm:weylharmbd} below follows immediately from 
Theorem~\ref{thm:genharmbd}, since the Weyl functions are clearly in $\mathcal{A}_{\Lambda_L}^{(1)}$.
We will here give a different, but equally short, proof which uses the specific properties of Weyl functions.
\begin{thm}\label{thm:weylharmbd}
Let $X$ and $Y$ be finite subsets of $\mathbb{Z}^{\nu}$ and take $L_0$ to be the smallest integer such that $X, Y \subset \Lambda_{L_0}$.
For any $\mu >0$, $L \geq L_0$, and any functions $f,g: \Lambda_{L_0} \to \mathbb{C}$ with 
support of $f$ in $X$ and support of $g$ in $Y$, the bound 
\begin{equation}\label{eq:thmbd3}
\left\| \left\{ \alpha_t^{h,L}(W(f)), W(g) \right\} \right\|_{\infty} \le C \|f\|_{\infty} \|g\|_{\infty} \sum_{x\in X, y \in Y} e^{-\mu\left(d(x,y) - c_{\omega, \lambda} \max (\frac{2}{\mu}, e^{(\mu/2) + 1})|t| \right)}
\end{equation}
holds for all $t \in \mathbb{R}$.  Here, as in (\ref{eq:d}), $d(x,y)$ is the distance on the torus and the constants 
may be taken as $C = (1 + c_{\omega, \lambda}e^{\mu/2} + c_{\omega, \lambda}^{-1})$ with $c_{\omega, \lambda} = (\omega^2 + 4\sum_{j = 1}^{\nu}\lambda_j)^{1/2}$.
\end{thm}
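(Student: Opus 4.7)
The plan is to leverage the two special structural features of Weyl functions: first, that the harmonic flow maps Weyl functions to Weyl functions (Proposition~\ref{prop:f_t}), and second, that their Poisson bracket is computable in closed form via the Weyl relation (Proposition~\ref{prop:wr}). These together reduce the operator-norm estimate to a scalar estimate on an inner product of two functions on $\Lambda_L$, after which Lemma~\ref{lem:htx} finishes the job.

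Concretely, I would proceed as follows. First, by Proposition~\ref{prop:f_t}, $\alpha_t^{h,L}(W(f)) = W(f_t)$ where
\begin{equation*}
f_t = f * \overline{\bigl(h_t^{(0)} + \tfrac{i}{2}(h_t^{(-1)} + h_t^{(1)})\bigr)} + \overline{f} * \bigl(\tfrac{i}{2}(h_t^{(1)} - h_t^{(-1)})\bigr).
\end{equation*}
Then, applying Proposition~\ref{prop:wr} to $W(f_t)$ and $W(g)$, we obtain
\begin{equation*}
\{\alpha_t^{h,L}(W(f)), W(g)\} = -\mathrm{Im}\bigl[\langle f_t, g\rangle\bigr]\, W(f_t)\, W(g).
\end{equation*}
Since $\|W(h)\|_\infty = 1$ for every $h: \Lambda_L \to \mathbb{C}$, the product of Weyl functions on the right is bounded in sup-norm by $1$, and therefore
\begin{equation*}
\|\{\alpha_t^{h,L}(W(f)), W(g)\}\|_\infty \leq |\langle f_t, g\rangle|.
\end{equation*}

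The remaining task is a pointwise estimate on $f_t$. Because $h_t^{(-1)}, h_t^{(0)}, h_t^{(1)}$ are real-valued (as noted in the proof of Lemma~\ref{lem:harmsol}), the triangle inequality yields
\begin{equation*}
|f_t(y)| \leq \sum_{x \in X} |f(x)|\Bigl(|h_t^{(0)}(y-x)| + |h_t^{(-1)}(y-x)| + |h_t^{(1)}(y-x)|\Bigr),
\end{equation*}
using that $f$ is supported in $X$. Pairing this with $g$, which is supported in $Y$, gives
\begin{equation*}
|\langle f_t, g\rangle| \leq \|f\|_\infty \|g\|_\infty \sum_{x \in X,\, y \in Y}\Bigl(|h_t^{(0)}(y-x)| + |h_t^{(-1)}(y-x)| + |h_t^{(1)}(y-x)|\Bigr).
\end{equation*}
Lemma~\ref{lem:htx} then bounds each $|h_t^{(m)}(y-x)|$ by a common exponential factor $\exp[-\mu(d(x,y) - c_{\omega,\lambda}\max(2/\mu, e^{(\mu/2)+1})|t|)]$, weighted by $1$, $c_{\omega,\lambda}^{-1}$, and $c_{\omega,\lambda}e^{\mu/2}$ respectively. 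Collecting these produces exactly the constant $C = 1 + c_{\omega,\lambda}^{-1} + c_{\omega,\lambda}e^{\mu/2}$ claimed in the theorem.

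There is no substantial obstacle here; the proof is essentially a consolidation of the three preceding results. The only minor care needed is tracking the real/imaginary parts and complex conjugates in the definition of $f_t$ to confirm that the triangle inequality still yields the three-term bound (rather than something with worse constants), and to confirm that the constant in front matches the one in Theorem~\ref{thm:genharmbd} \emph{minus} the factor of $2$ on the $|h_t^{(0)}|$ term --- this is the concrete improvement that justifies giving the Weyl-function argument separately.
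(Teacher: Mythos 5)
Your proposal is correct and follows essentially the same route as the paper's proof: invoke Proposition~\ref{prop:f_t} to write $\alpha_t^{h,L}(W(f)) = W(f_t)$, apply the Weyl relation of Proposition~\ref{prop:wr} to reduce the bound to $|\langle f_t, g\rangle|$, expand $f_t$ as in (\ref{eq:fty}) and use the triangle inequality together with Lemma~\ref{lem:htx}. The bookkeeping of the $\pm i/2$ coefficients indeed collapses to the three-term sum with weights $1$, $c_{\omega,\lambda}^{-1}$, $c_{\omega,\lambda}e^{\mu/2}$, giving exactly the stated constant, just as in the paper.
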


\begin{proof}  
Combining Propositions \ref{prop:f_t} and \ref{prop:wr}, it is clear that
\begin{equation}
\left\{ \alpha_t^{h,L}(W(f)), W(g) \right\} = \left\{ W(f_t), W(g) \right\} = - {\rm Im}\left[ \langle f_t, g \rangle \right] W(f_t) W(g).
\end{equation}
In this case, the bound
\begin{equation}
\left\| \left\{ \alpha_t^{h,L}(W(f)), W(g) \right\} \right\|_{\infty} \leq \left|  {\rm Im} \left[ \langle f_t, g \rangle \right]  \right|,
\end{equation}
readily follows. Appealing again to Proposition \ref{prop:f_t}, we have that for any $y \in \Lambda_L$, 
\begin{eqnarray} \label{eq:fty}
f_t(y) & = & \sum_{x \in X} f(x) \left( h^{(0)}_t(x-y) - \frac{i}{2}h^{(-1)}_t(x-y) - \frac{i}{2}h^{(1)}_t(x-y) \right) \nonumber \\
& \mbox{ } & \quad + \sum_{x \in X} \overline{f(x)} \left( \frac{i}{2}h^{(1)}_t(x-y) - \frac{i}{2}h^{(-1)}_t(x-y) \right) \, ,
\end{eqnarray}
and therefore,
\begin{eqnarray}
\left| \langle f_t, g \rangle \right|  & = &  \left| \sum_{y \in Y} \overline{f_t(y)}g(y) \right|  \nonumber \\ 
& \leq &  \| f \|_{\infty} \, \| g \|_{\infty} \sum_{x \in X, y \in Y} |h^{(0)}_t(x-y)| + |h^{(-1)}_t(x-y)| + |h^{(1)}_t(x-y)| \,.
\end{eqnarray}
Theorem \ref{thm:weylharmbd} now follows from Lemma \ref{lem:htx}. 
\end{proof}

We end this section with a corollary of Theorem~\ref{thm:weylharmbd} that will be particularly
useful in the next sections. The locality bound we prove for the anharmonic dynamics is derived by iterating 
a certain inequality involving the harmonic estimate. With this in mind, it is useful to introduce the following family of
decaying functions. For any $\mu >0$, consider $F_{\mu} : [0, \infty) \to (0, \infty)$ 
defined by 
\begin{equation} \label{eq:defF}
F_\mu (r) = \frac{e^{-\mu r}}{(1+r)^{\nu+1}} \, .
\end{equation}
Clearly, these function $F_{\mu}$ also depend on the quantity $\nu \geq 1$, which is the dimension
of the underlying lattice in our models, but we will suppress that dependence in our notation. 
Unlike the bare exponential $e^{- \mu r}$, these functions have the following nice property.
There exists a number $C_{\nu} >0$ for which, given any pair of sites $x,y \in \mathbb{Z}^{\nu}$,
\begin{equation}\label{eq:Fconv}
\sum_{z\in \mathbb{Z}^{\nu}} F_\mu( |x-z| )F_\mu( |z-y| ) \leq C_\nu F_\mu( |x -y| ) \, .
\end{equation}
Here one may take 
\begin{equation}\label{eq:Cnu}
C_\nu = 2^{\nu+1}\sum_{z \in \mathbb{Z}^{\nu} }\frac{1}{(1+|z|)^{\nu+1}} \, .
\end{equation}
Functions of this type were introduced in \cite{NaOgSi}, see also \cite{harm}, as an aide in proving Lieb-Robinson bounds. 
We will use them here as well.  

We can rewrite the decay expressed in our harmonic estimates, i.e. (\ref{eq:thmbd2}), in terms of these functions $F_{\mu}$.  
\begin{corollary}\label{cor:weylharmbd}
Let $X$ and $Y$ be finite subsets of $\mathbb{Z}^{\nu}$ and take $L_0$ to be the smallest integer such that $X, Y \subset \Lambda_{L_0}$.
For any $\mu >0$, $\epsilon>0$, $L \geq L_0$, and any functions $f,g: \Lambda_{L_0} \to \mathbb{C}$ with 
support of $f$ in $X$ and support of $g$ in $Y$, the bound 
\begin{equation}\label{eq:harmbdF}
\left\| \left\{ \alpha_t^{h,L}(W(f)), W(g) \right\} \right\|_{\infty}  \leq C \, \|f\|_{\infty}\,  \|g\|_{\infty} \, e^{(\mu+\epsilon) v_{\mathrm{h}} (\mu+\epsilon)\, |t|} \sum_{x \in X, y\in Y} F_{\mu}(d(x,y)) \, , 
\end{equation}
holds for all $t \in \mathbb{R}$.  Here
\begin{equation}
C \, = \,  (1 + c_{\omega, \lambda} e^{\frac{( \mu + \epsilon)}{2}} + c_{\omega, \lambda}^{-1} ) \, \sup_{s \geq 0}  \left[ (1+s)^{\nu + 1} e^{- \epsilon s} \right] \, .
\end{equation}
and $v_{\rm{h}}$ is as defined in (\ref{eq:defvh}).
\end{corollary}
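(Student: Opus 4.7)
The plan is to derive Corollary~\ref{cor:weylharmbd} directly from Theorem~\ref{thm:weylharmbd} by invoking that theorem with decay rate $\mu+\epsilon$ in place of $\mu$, and then absorbing the extra exponential factor $e^{-\epsilon d(x,y)}$ into the polynomial prefactor that distinguishes $F_{\mu}$ from the bare exponential $e^{-\mu r}$.

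More concretely, applying Theorem~\ref{thm:weylharmbd} with the rate $\mu + \epsilon > 0$ yields
\begin{equation*}
\left\| \left\{ \alpha_t^{h,L}(W(f)), W(g) \right\} \right\|_{\infty} \le C_0 \,\|f\|_{\infty}\|g\|_{\infty}\, e^{(\mu+\epsilon) v_{\rm h}(\mu+\epsilon) |t|} \sum_{x\in X,\,y\in Y} e^{-(\mu+\epsilon)\,d(x,y)},
\end{equation*}
with $C_0 = 1 + c_{\omega,\lambda}e^{(\mu+\epsilon)/2} + c_{\omega,\lambda}^{-1}$. The velocity factor $v_{\rm h}(\mu+\epsilon)$ appearing inside the exponential comes from the definition (\ref{eq:defvh}) applied at rate $\mu+\epsilon$, which matches the form required by (\ref{eq:harmbdF}).

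The key pointwise manipulation is to write, for any $r \geq 0$,
\begin{equation*}
e^{-(\mu+\epsilon) r} \;=\; \frac{e^{-\mu r}}{(1+r)^{\nu+1}} \cdot (1+r)^{\nu+1} e^{-\epsilon r} \;=\; F_{\mu}(r) \cdot (1+r)^{\nu+1} e^{-\epsilon r} \;\le\; F_{\mu}(r)\,\sup_{s \geq 0}\bigl[(1+s)^{\nu+1} e^{-\epsilon s}\bigr].
\end{equation*}
The supremum on the right is finite for every $\epsilon > 0$ since the polynomial is dominated by the exponential at infinity. Substituting this bound into the sum over $x \in X$ and $y \in Y$ (taking $r = d(x,y)$) and combining the resulting polynomial-times-exponential supremum with $C_0$ produces exactly the constant $C$ stated in the corollary.

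There is no real obstacle: the argument is purely an algebraic reshuffling of Theorem~\ref{thm:weylharmbd} together with an elementary calculus bound. The only point requiring mild care is bookkeeping, namely ensuring that the rate used inside $v_{\rm h}$ is $\mu+\epsilon$ (so that the velocity in the final bound matches the factor $(\mu+\epsilon) v_{\rm h}(\mu+\epsilon)$ in the exponential) while the decay in the sum is expressed through $F_{\mu}$ at rate $\mu$; this split is what makes the $F_{\mu}$ formulation useful for the convolution property (\ref{eq:Fconv}) that will be needed when iterating in Sections~\ref{sec:anharmss} and~\ref{sec:anharmms}.
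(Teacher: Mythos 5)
Your proposal is correct and is exactly the argument the paper intends (the corollary is stated without an explicit proof, following the remark that the decay in Theorem~\ref{thm:weylharmbd} can be rewritten in terms of $F_{\mu}$): apply Theorem~\ref{thm:weylharmbd} at rate $\mu+\epsilon$ and absorb $e^{-\epsilon d(x,y)}$ via $e^{-(\mu+\epsilon)r} \le F_{\mu}(r)\,\sup_{s\ge 0}\bigl[(1+s)^{\nu+1}e^{-\epsilon s}\bigr]$, which reproduces both the stated constant $C$ and the factor $e^{(\mu+\epsilon)v_{\rm h}(\mu+\epsilon)|t|}$. No gaps.
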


%
%
%
%

\setcounter{equation}{0}

\section{Single Site Anharmonicities} \label{sec:anharmss}

In this section, we will prove a locality result, analogous to Theorem~\ref{thm:weylharmbd}, for a
specific class of perturbations of the harmonic Hamiltonian. A much more general result, which
follows from the same basic arguments, is presented in the next section. We begin with
a precise statement of the models we consider, and then prove the result.

To make our basic technique more transparent, we will only consider single-site potentials that are generated by a particular function
$V$ in this section, see Section~\ref{sec:anharmms} for a more general result. Let $V : \mathbb{R} \to \mathbb{R}$ satisfy
$V \in C^2(\mathbb{R})$,  $V' \in L^1(\mathbb{R})$,  $V'' \in L^{\infty}(\mathbb{R})$, and suppose further that the
quantity
\begin{equation} \label{eq:defkap}
\kappa_V = \int_{\mathbb{R}} |r| \, |\widehat{V'} (r)| \,  \rd r 
\end{equation}
is finite. Here $\widehat{V'}$ is the Fourier transform of $V'$. Given such a function $V$ and an integer $L \geq 1$, we
define an anharmonic Hamiltonian $H^{\Lambda_L} : \mathcal{X}_{\Lambda_L} \to \mathbb{R}$ by setting
 \begin{equation}\label{eq:Anhar}
 H^{\Lambda_L} = H_h^{\Lambda_L} + \sum_{z\in \Lambda_L} V_z \, ,
\end{equation}
where for each $z \in \Lambda_L$, the function $V_z : \mathcal{X}_{\Lambda_L} \to \mathbb{R}$ is given by $V_z(\mathrm{x}) = V(q_z)$.

As is discussed at the end of Section~\ref{subsec:weyl}, we will state our result in terms of the functions $F_{\mu} : [0, \infty) \to (0, \infty)$
given by
\begin{equation} \label{eq:defF2}
F_\mu (r) = \frac{e^{-\mu r}}{(1+r)^{\nu+1}} \, ,
\end{equation}
with $\nu>0$ corresponding to the dimension of the underlying lattice $\mathbb{Z}^{\nu}$.
The goal of this section is to prove the following result. 
\begin{thm}\label{thm:anharm2}
Suppose $V : \mathbb{R} \to \mathbb{R}$ satisfies
$V \in C^2(\mathbb{R})$,  $V' \in L^1(\mathbb{R})$,  $V'' \in L^{\infty}(\mathbb{R})$, and $\kappa_V$, as in (\ref{eq:defkap}) above, is
finite. Let $X$ and $Y$ be finite subsets of $\mathbb{Z}^{\nu}$ and take $L_0$ to be the smallest integer such that $X, Y \subset \Lambda_{L_0}$.
For any $L \geq L_0$ and $t \in \mathbb{R}$, let $\alpha_t^L$ denote the dynamics corresponding to $H^{\Lambda_L}$.
Then, for any $\mu >0$, $\epsilon >0$, and any functions $f,g: \Lambda_{L_0} \to \mathbb{C}$ with 
support of $f$ in $X$ and support of $g$ in $Y$, the bound 
\begin{equation}\label{eq:anharmbd}
 \left\| \left\{ \, \alpha_t^L (W(f)) , W (g) \, \right\} \right\|_{\infty} \leq  C \, \|f\|_{\infty} \| g \|_{\infty} \, e^{ \delta |t|} 
\sum_{x\in X, y\in Y} \, F_{\mu} \left(d(x,y) \right)
\end{equation}
holds for all $t \in \mathbb{R}$.
Here one may take 
\begin{equation} \label{eq:constc}
C \, = \,  (1 + c_{\omega, \lambda} e^{\frac{( \mu + \epsilon)}{2}} + c_{\omega, \lambda}^{-1} ) \, \sup_{s \geq 0}  \left[ (1+s)^{\nu + 1} e^{- \epsilon s} \right] \, 
\end{equation}
and 
\begin{equation} \label{eq:vah}
\delta = \delta(\mu, \epsilon) =  ( \mu + \epsilon )v_{\rm h}( \mu + \epsilon) + C C_{\nu} \kappa_V
\end{equation}
where $v_{\rm h}$ is as in (\ref{eq:defvh}), $C_{\nu}$ is in (\ref{eq:Cnu}), and $\kappa_V$ is in (\ref{eq:defkap}).
\end{thm}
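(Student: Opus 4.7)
The strategy is a Duhamel-type interpolation between the harmonic dynamics $\alpha_t^{h,L}$ and the full anharmonic dynamics $\alpha_t^L$. Differentiating $s\mapsto \alpha_s^L \circ \alpha_{t-s}^{h,L}(W(f))$ and using the fact that $H_h^{\Lambda_L}$ is conserved under $\alpha_t^{h,L}$ yields the Duhamel identity
\begin{equation*}
\alpha_t^L(W(f)) \;=\; \alpha_t^{h,L}(W(f)) \;+\; \int_0^t \alpha_s^L\!\left(\left\{\alpha_{t-s}^{h,L}(W(f)),\, V\right\}\right) ds,
\end{equation*}
where $V=\sum_{z\in\Lambda_L} V_z$. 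By Proposition~\ref{prop:f_t}, $\alpha_{t-s}^{h,L}(W(f))=W(f_{t-s})$, so the correction reduces to analyzing the single-site Poisson brackets $\{W(f_{t-s}),V_z\}$.

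A direct computation, using that $V_z$ depends only on $q_z$, gives
\[
\{W(f_{t-s}),V_z\} \;=\; -i\,\mathrm{Im}[f_{t-s}(z)]\,V'(q_z)\,W(f_{t-s}).
\]
The crucial step is to represent $V'(q_z)$ as an integral of Weyl functions: using the Fourier inversion $V'(q_z)=\int\widehat{V'}(r)e^{irq_z}\,dr$ and identifying $e^{irq_z}=W(r\delta_z)$ (with $\delta_z$ the indicator at $z$), together with the multiplication rule $W(r\delta_z)W(f_{t-s})=W(r\delta_z+f_{t-s})$, the correction term is expressed again as an integral of Weyl observables. Substituting back and taking the Poisson bracket with $W(g)$ yields the self-referential estimate
\begin{equation*}
B_{f,g}(t) \;\le\; B^h_{f,g}(t) \;+\; \int_0^t \sum_{z\in\Lambda_L}|f_{t-s}(z)| \int |\widehat{V'}(r)|\,B_{r\delta_z+f_{t-s},\,g}(s)\,dr\,ds,
\end{equation*}
where $B_{f,g}(t):=\|\{\alpha_t^L(W(f)),W(g)\}\|_\infty$ and $B^h$ denotes its harmonic analogue, controlled by Corollary~\ref{cor:weylharmbd}.

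I would then iterate this recursion to produce a Dyson-type series. At each step three ingredients combine: (i) a pointwise-in-$h$ sharpening of Corollary~\ref{cor:weylharmbd}, of the form $B^h_{h,g}(s)\le C\,e^{(\mu+\epsilon)v_h(\mu+\epsilon)|s|}\sum_{x,y}|h(x)||g(y)|F_\mu(d(x,y))$, valid for an arbitrary generating function $h$ and immediate from the identity $\|\{\alpha_s^{h,L}(W(h)),W(g)\}\|_\infty \le |\mathrm{Im}\langle h_s,g\rangle|$ used in the proof of Theorem~\ref{thm:weylharmbd}; (ii) the harmonic spreading bound on $|f_{t-s}(z)|$ from $X$ (via Lemma~\ref{lem:htx}), contributing a factor $F_\mu(d(X,z))$; and (iii) the convolution estimate (\ref{eq:Fconv}), which collapses nested $F_\mu$'s into a single $F_\mu(d(X,Y))$ at a cost $C_\nu$ per step. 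The integral against $|\widehat{V'}(r)|$ contributes a factor $\kappa_V=\int|r||\widehat{V'}(r)|\,dr$ per iteration, since the new Weyl source $r\delta_z$ enters the sharpened harmonic bound with weight $|r|$. The $n$-th term of the series is thereby bounded by $(CC_\nu\kappa_V|t|)^n/n!$ times the harmonic estimate, and summation produces the exponential correction $\exp(CC_\nu\kappa_V|t|)$, which combines with the harmonic $\exp((\mu+\epsilon)v_h(\mu+\epsilon)|t|)$ to give the stated $\delta=(\mu+\epsilon)v_h(\mu+\epsilon)+CC_\nu\kappa_V$.

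The main obstacle I expect is preserving the $F_\mu$ spatial envelope through every order of the iteration: the Weyl generators at the $n$-th stage carry $r\delta_z$ contributions at intermediate sites and are no longer supported in $X$, so Corollary~\ref{cor:weylharmbd} in the form stated (with a summed support) is too weak. Only its pointwise-in-$h$ strengthening allows the convolution inequality (\ref{eq:Fconv}) to telescope all intermediate sites into $F_\mu(d(X,Y))$ cleanly, so that the Dyson series converges with the claimed constants.
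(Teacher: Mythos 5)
Your setup (interpolating between $\alpha^h$ and $\alpha$, computing $\{W(f_{t-s}),V_z\}=-i\,{\rm Im}[f_{t-s}(z)]V'(q_z)W(f_{t-s})$, and Fourier-decomposing $V'$ into Weyl functions) matches the paper, but the recursion you then iterate is structurally different and does not close with the claimed constants. By merging $W(r\delta_z)W(f_{t-s})=W(r\delta_z+f_{t-s})$ you iterate on $B_{r\delta_z+f_{t-s},g}$, and at every subsequent application of your inequality \emph{both} the prefactor $\sum_{z'}|h_{\,\cdot\,}(z')|$ and the terminal harmonic bound involve the whole accumulated generator $h$, not only the newly created single-site piece. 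Already at first order, the contribution in which the pointwise harmonic bound picks up the $f_{t-s}$-part of $h^{(1)}=r\delta_z+f_{t-s}$ is of size roughly $\|\widehat{V'}\|_{L^1}\,|X|\,\|f\|_\infty^2\,\|g\|_\infty\,t\,e^{2(\mu+\epsilon)v_{\rm h}(\mu+\epsilon)|t|}$: quadratic in $\|f\|_\infty$, carrying $|X|$ and $\|\widehat{V'}\|_{L^1}$ rather than $\kappa_V$, and with a \emph{doubled} exponential rate, since both $|f_{t-s}(z)|$ and the $|f_{t-s}(x)|$ inside the harmonic estimate cost a factor $e^{(\mu+\epsilon)v_{\rm h}(\mu+\epsilon)(t-s)}$ via (\ref{eq:FH}). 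Deeper in the iteration it gets worse: an $r_j$ introduced at stage $j$ reappears in the prefactors and terminal bounds of later stages, so some terms require $\int |r|^2|\widehat{V'}(r)|\,{\rm d}r$, which is not finite under the stated hypotheses; and since the per-stage factor is itself exponentially large in $t$, summing your series yields super-exponential growth in $t$, not $e^{\delta|t|}$ with $\delta=(\mu+\epsilon)v_{\rm h}(\mu+\epsilon)+CC_\nu\kappa_V$. So the assertion that the $n$-th term is bounded by $(CC_\nu\kappa_V|t|)^n/n!$ times the harmonic estimate does not follow from your recursion.

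The missing idea is the paper's phase-elimination step. Keep the product $\alpha_s(W(f_{t-s}))\,\alpha_s(V_z')$ unmerged and apply the Leibniz rule to the outer bracket with $W(g)$: the term in which the bracket hits $\alpha_s(W(f_{t-s}))$ is $i\mathcal{L}_t(s)\Psi_t(s)$ with $\mathcal{L}_t(s)$ \emph{real-valued}, hence it is removed from the norm estimate by the unimodular integrating factor $U_t(s)$ solving $U_t'=-i\mathcal{L}_t U_t$, as in (\ref{eq:dpsi})--(\ref{eq:bd1}). Only $\{\alpha_s(V_z'),W(g)\}$ survives, and after Fourier inversion the iterated object is $\{\alpha_s(W(r\delta_z)),W(g)\}$ — exactly the same kind of quantity as the original, but with a single-site generator of sup-norm $|r|$ — so each stage contributes precisely the factor $C\kappa_V C_\nu$ through (\ref{eq:Fconv}), the spatial envelope telescopes, and the Dyson series sums to the claimed exponential. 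Finally, since one iterates an \emph{inequality}, the remainder after $m$ steps must be shown to vanish; the paper does this with the a priori Gronwall-based estimate of Lemma~\ref{lem:pbbd} (giving (\ref{eq:error})), a point your proposal does not address.
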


Before we prove Theorem~\ref{thm:anharm2}, we comment on the corresponding
anharmonic velocity. With arguments similar to those given after the proof of Theorem~\ref{thm:genharmbd}, it is
clear that Theorem~\ref{thm:anharm2} implies Theorem~\ref{thm:anharm}. 
In this case, we find that an upper bound on the anharmonic velocity for this model is
\begin{equation}
v_{\rm ah}( \mu, \epsilon)  = \left( 1 + \frac{ \epsilon}{ \mu} \right) v_{\rm h}( \mu + \epsilon) + \frac{C C_{\nu} \kappa_V}{ \mu} \, .
\end{equation}

We now present the proof.

\noindent{\it{Proof of Theorem~\ref{thm:anharm2}. }}  
Our proof of this estimate is perturbative, and
we begin by interpolating between the harmonic and anharmonic dynamics.
Fix $L \geq L_0$ as in the statement of the theorem. Since we will regard both the
harmonic and anharmonic dynamics on the same volume $\Lambda_L$, we drop 
the dependence of each on $L$. Observe that for any $t>0$, 
\begin{equation}
\left\{ \alpha_t(W(f)), W(g) \right\} \, - \, \left\{ \alpha^h_t(W(f)), W(g) \right\}  \, = \, \int_0^t \frac{\rd}{\rd s}  \left\{ \alpha_s \left( \alpha_{t-s}^h (W(f)) \right), W(g) \right\} \, \rd s \, .
\end{equation}
Moreover, a direct calculation shows that
\begin{eqnarray} \label{eq:dalpha}
\frac{\rd}{\rd s} \alpha_s \left( \alpha_{t-s}^h \left( W(f) \right) \right) & = & \alpha_s \left( \left\{ \alpha_{t-s}^h(W(f)), H \right\} \right) - \alpha_s \left( \alpha_{t-s}^h \left( \left\{ W(f), H_h \right\} \right) \right) \nonumber \\
& = & \alpha_s \left( \left\{ \alpha_{t-s}^h(W(f)), H-H_h \right\} \right) \nonumber \\
& = & \sum_{z \in \Lambda_L} \alpha_s \left( \left\{ \alpha_{t-s}^h(W(f)), V_z \right\} \right) \, .
\end{eqnarray}
The Poisson bracket on the right-hand side of (\ref{eq:dalpha}) 
can be simplified
\begin{equation}
\left\{ \alpha_{t-s}^h(W(f)), V_z \right\}  = \left\{W(f_{t-s}), V_z \right\} = - i {\rm Im} \left[f_{t-s}(z) \right] \, W(f_{t-s}) \, V_z' \,  .
\end{equation}
For the first equality above we used Proposition~\ref{prop:f_t}, and we have denoted by
$V_z'$ the function $V_z': \mathcal{X}_{\Lambda_L} \to \mathbb{R}$ with $V_z'( \mathrm{x}) = V'(q_z)$. 

These calculations lead to a particularly simple differential equation and thus, eventually, the bound (\ref{eq:bd1})
appearing below. In fact, for $t>0$ fixed
and $0 \leq s \leq t$, define the function
\begin{equation}\label{eq:psi}
\Psi_t(s) = \{ \alpha_s ( \alpha_{t-s}^h (W(f))), W(g) \} \, .
\end{equation}
We have shown that
\begin{eqnarray} \label{eq:dpsi}
\frac{\rd}{\rd s} \Psi_t(s) & = &  \sum_{z \in \Lambda_L} \left\{ \alpha_s \left( \left\{ \alpha_{t-s}^h(W(f)), V_z \right\} \right), W(g) \right\} \nonumber \\
& = & i \mathcal{L}_t(s) \Psi_t(s) \, + \, \mathcal{Q}_t(s),  
\end{eqnarray}
where
\begin{equation}
\begin{split}
\mathcal{L}_t(s) & =   - \sum_{z \in \Lambda_L} {\rm Im} \left[ f_{t-s}(z) \right] \alpha_s( V_z' ) \, ,  \\
\mathcal{Q}_t(s) & =  -  i \sum_{z \in \Lambda_L}  {\rm Im} \left[ f_{t-s}(z) \right] \alpha_s \left( \alpha_{t-s}^h(W(f)) \right) \left\{ \alpha_s ( V_z' ) , W(g) \right\}\, ,
\end{split}
\end{equation}
and the final equality in (\ref{eq:dpsi}) follows from an application of the Leibnitz rule for Poisson brackets.
Since for each fixed $s$, $\mathcal{L}_t(s)$ is a real-valued function of phase space, the solution $U_t$ of
\begin{equation}
\frac{\rd}{\rd s} U_t(s) \, = \, -i \mathcal{L}_t(s) U_t(s) \quad \mbox{with } U_t(0) = 1,
\end{equation}
is a complex exponential. In addition, it is easy to see that
\begin{equation}
\frac{\rd}{\rd s} \left( \Psi_t(s) U_t(s) \right) \, = \, \mathcal{Q}_t(s) U_t(s),
\end{equation}
and therefore,
\begin{equation}
\Psi_t(t) U_t(t) \, = \, \Psi_t(0) \, + \, \int_0^t \mathcal{Q}_t(s) U_t(s) \, \rd s,
\end{equation}
from which 
\begin{eqnarray} \label{eq:bd1}
\left\| \left\{ \alpha_t(W(f)), W(g) \right\} \right\|_{\infty} & \leq & \left\| \left\{ \alpha_t^h(W(f)), W(g) \right\} \right\|_{\infty}  \nonumber \\
& \mbox{ } & + \sum_{z \in \Lambda_L} \int_0^t \, \left| {\rm Im} \left[ f_{t-s}(z) \right] \right|  \left\| \left\{ \alpha_s ( V_z' ) , W(g) \right\} \right\|_{\infty} \, \rd s \, ,
\end{eqnarray}
readily follows. 

Now, if $V_z'$ was a Weyl function, then we could immediately iterate the inequality in (\ref{eq:bd1}) and
derive a bound. This is not the case, however, our assumptions on $V$ allow us to write $V_z'$ as
an average of Weyl functions through its Fourier representation. In fact,  we write the Fourier transform of
$V'$ as
\begin{equation}
\widehat{V' }(r) = \frac{1}{2\pi} \int_{\mathbb{R}}  e^{-iqr} \,  V'(q) \, \rd q \, ,
\end{equation}
and by inversion, one has that
\begin{equation} \label{eq:finver}
V'(q) =  \int_{\mathbb{R}} e^{irq} \, \widehat{ V'}(r) \,  \rd r \,. 
\end{equation}
This implies that, as a function of phase space, $V_z'$ can be expressed as
\begin{equation}\label{eq:V_z'}
V_z' = \int_{\mathbb{R}} W(r\delta_z) \widehat{V'}(r) \, \rd r \, 
\end{equation} 
where $r \delta_z : \Lambda_L \to \mathbb{R}$ is the function that has value $r$ at $z$ and $0$ otherwise. 
Inserting (\ref{eq:V_z'}) into (\ref{eq:bd1}), we have that
\begin{eqnarray} \label{eq:bd2}
\left\| \left\{ \alpha_t(W(f)), W(g) \right\} \right\|_{\infty} & \leq &  \left\| \left\{ \alpha_t^h(W(f)), W(g) \right\} \right\|_{\infty}  \\
& +&  \sum_{z \in \Lambda_L} \int_0^t  \, \left| {\rm Im} \left[ f_{t-s}(z) \right] \right|  \int_{\mathbb{R}} \left| \widehat{ V'}(r) 
\right| \big\| \left\{ \alpha_{s} \left( W(r \delta_z) \right) , W(g) \right\} \big\|_{\infty} \, \rd r \, \rd s \,. \nonumber
\end{eqnarray}
At this stage, we can finally iterate the inequality. First, however, we insert the harmonic bound found in
Corollary~\ref{cor:weylharmbd}.

Recall that for any $\mu >0$ and $\epsilon >0$ we have established (\ref{eq:harmbdF}) with a constant $C$ as in (\ref{eq:constc}). 
With equation (\ref{eq:fty}), it is easy to see that, for any $\mu >0$ and $\epsilon >0$
\begin{equation}\label{eq:FH}
\left| {\rm Im}[f_t(z)] \right| \leq C \, \|f\|_{\infty} \, e^{(\mu + \epsilon) v_{\mathrm{h}}( \mu + \epsilon)|t|} \sum_{x \in X} F_{\mu}(d(x, z)) \, ,
\end{equation}
also holds for any $z \in \Lambda_L$ and $t \in \mathbb{R}$.
To ease the notation a bit, we will denote by $\tilde{v} = (\mu + \epsilon) v_{\mathrm{h}}( \mu + \epsilon)$.
Using these bounds, the inequality in (\ref{eq:bd2}) now takes the form
\begin{eqnarray}\label{eq:bdtoit}
\begin{split}
&\left\| \left\{ \alpha_t(W(f)), W(g) \right\} \right\|_{\infty}   \leq  C \, \| f \|_{\infty} \, \| g \|_{\infty}
\, e^{\tilde{v} t} \, \sum_{x \in X, y \in Y} F_{\mu}(d(x,y))  
\\
& + C \, \|f\|_{\infty} \sum_{x\in X} \sum_{ z \in \Lambda_L}  F_{\mu}(d(x,z)) \int_0^t  \, e^{\tilde{v}(t-s)} \, \int_{\mathbb{R}} | \widehat{ V'}(r) 
| \big\| \left\{ \alpha_s \left( W(r \delta_z) \right) , W(g) \right\} \big\|_{\infty} \, \rd r \, \rd s  \, .
\end{split}
\end{eqnarray}

Upon iterating (\ref{eq:bdtoit}) $m \geq 1$ times, we find that
\begin{equation}
\left\| \left\{ \alpha_t(W(f)), W(g) \right\} \right\|_{\infty}  \leq C \, \| f \|_{\infty} \, \| g \|_{\infty}
\, e^{\tilde{v} t} \, \sum_{x \in X, y \in Y} \sum_{n=0}^m a_n(x,y;t) \, + \, R_{m+1}(t), 
\end{equation}
where 
\begin{equation} \label{eq:a0}
a_0(x,y;t) = F_{\mu}(d(x,y)) \, ,
\end{equation}
\begin{eqnarray}  \label{eq:a1}
a_1(x,y;t)  & = &   C t \, \int_{\mathbb{R}} |r| \, \left| \widehat{V'}(r) \right| \, dr \,  \sum_{z \in \Lambda_L} F_{\mu} \left( d(x,z) \right) F_{\mu} \left( d(z, y) \right) \nonumber \\
& \leq & C \, \kappa_V \, C_{\nu} \, t F_{\mu} \left( d(x,y) \right) \, ,
\end{eqnarray}
and in general,
\begin{eqnarray} \label{eq:an}
a_n(x,y;t) & = & \frac{(C t)^n}{n!} \left( \prod_{k=1}^n \int_{\mathbb{R}} |r_k| \, \left| \widehat{V'}(r_k) \right| \, dr_k \right) \sum_{z_1, \cdots, z_n \in \Lambda_L} F_{\mu} \left( d(x,z_1) \right) \cdots F_{\mu} \left( d(z_n, y) \right) \nonumber \\
& \leq & \frac{(C\, \kappa_V \, C_{\nu} \, t)^n}{n!} F_{\mu} \left( d(x,y) \right) \, ,
\end{eqnarray}
for any $1 \leq n \leq m$. In (\ref{eq:a1}) and (\ref{eq:an}), we have used (\ref{eq:Fconv}) several times.

{F}rom Lemma~\ref{lem:pbbd}, found in Section~\ref{sec:app}, it is easy to see that the apriori estimate 
\begin{equation}
\left\|  \left\{ \alpha_s(W(r \delta_z)), W(g) \right\} \right\|_{\infty} \, \leq \, C_1 \, | Y | \, |r| \, \| g \|_{\infty} \, {\rm exp} \left( C_2 \, t^2 \right)
\end{equation}
holds for all $0 \leq s \leq t$. Thus, for $t>0$ fixed, the remainder term $R_{m+1}(t)$ converges to zero as $m \to \infty$. 
In fact,
\begin{equation}\label{eq:error}
R_{m+1}(t)  \leq C_1 \, |X| \, |Y| \, \| f \|_{\infty} \| g \|_{\infty} e^{ \tilde{v} t + C_2 t^2} \frac{(C \, \kappa_V \, C_{\nu} t )^{m+1}}{(m+1)!} \, .  
 \end{equation}
We have proven that
\begin{equation}\label{eq:LR2}
\left\| \left\{ \alpha_t(W(f)), W(g) \right\} \right\|_{\infty}   \leq  C \, \| f \|_{\infty} \, \| g \|_{\infty}
\, e^{(\tilde{v} + C \, \kappa_V \, C_{\nu} ) t} \, \sum_{x \in X, y \in Y} F_{\mu}(d(x,y)) \, , 
\end{equation}
i.e. (\ref{eq:anharmbd}) as claimed. $\hfill \qed$

%
%
%
%

\setcounter{equation}{0}

\section{Multiple Site Anharmonicities} \label{sec:anharmms}

In this section, we will generalize Theorem~\ref{thm:anharm2} in such a way that it covers
perturbations involving long range interactions. As in the previous sections, we will be fixing
some integer $L \geq 1$ and considering only finite volumes $\Lambda_L \subset \mathbb{Z}^{\nu}$.

We will introduce these perturbations quite generally and then discuss the assumptions necessary to
prove our locality result. To each subset $Z \subset \Lambda_L$, we will assign a function 
$V( \cdot; Z) : \mathbb{R}^Z \to \mathbb{R}$ and a corresponding function of phase space
$V_Z : \mathcal{X}_{\Lambda_L} \to \mathbb{R}$ defined by setting
\begin{equation}
V_Z( \mathrm{x}) = V \left( \{ q_z \}_{z \in Z}; Z \right) \, .
\end{equation}
Here $\{ q_z \}_{z \in Z}$ is regarded as a vector in $\mathbb{R}^{Z}$ and the number $V \left( \{ q_z \}_{z \in Z}; Z \right)$
is calculated by evaluating $V( \cdot; Z)$ with $q_z$ as the value in the $z$-th component for each $z \in Z$.
With this understanding, we will use the notation
\begin{equation}
\partial_z V_Z = \frac{ \partial}{ \partial q_z} V_Z = \partial_zV( \cdot ; Z) \, ,
\end{equation}
to denote the partial derivatives of $V_Z$.

In general, the finite volume anharmonic Hamiltonians we consider are of the form $H^{\Lambda_L} : \mathcal{X}_{\Lambda_L} \to \mathbb{R}$
with
\begin{equation} \label{eq:defgenham}
H^{\Lambda_L} = H^{\Lambda_L}_h + \sum_{Z \subset \Lambda_L} V_Z \, ,
\end{equation}
where the sum above is over all subsets of $\Lambda_L$. As we saw in Section~\ref{sec:anharmss}, in order to prove our locality 
result, we need some assumptions on the functions $V_Z$. We will now list these explicitly below.

First, we use Lemma~\ref{lem:solbd}, proven in Section~\ref{sec:app}, to provide explicit bounds on the components of the flow which, in particular,
prevent the solutions from blowing-up in finite time. For these estimates, we assume the perturbation above satisfies: 
\newline i) For each $Z \subset \Lambda_L$, the function $V_Z$ has well-defined first
order partial derivatives. 
\newline ii) There exist numbers $C_1 \geq 0$, $\tilde{C}_1 \geq 0$, and $\mu_1 \geq 0$ such that
for each $x \in \Lambda_L$ and any $\mathrm{x} \in \mathcal{X}_{\Lambda_L}$,
\begin{equation} \label{eq:harmdom0}
\left( \sum_{Z \subset \Lambda_L}  \left| \partial_x V_Z ( \mathrm{x}) \right| \right)^2 \leq  C_1 \sum_{y \in \Lambda_L} (q_y^2 + \tilde{C}_1) F_{\mu_1} \left(d(x,y) \right) \, .
\end{equation}
The decaying functions $F_{\mu}$ are as defined at the end of Section~\ref{subsec:weyl}. 

Next, much like in the proof of Theorem~\ref{thm:anharm2}, we will need an apriori estimate on the
Poisson bracket of specific, dynamically evolved observables. This is the content of Lemma~\ref{lem:pbbd}
found in the next section. To prove it we use Lemma~\ref{lem:dsolbd}, and therefore, we must assume
\newline iii) For each $Z \subset \Lambda_L$, the function $V_Z$ has well-defined second order 
partial derivatives.
\newline iv) There exist numbers $C_2 \geq 0$ and $\mu_2 \geq 0$ for which: given any
pair $x,y \in \Lambda_L$, the bound
\begin{equation} \label{eq:2derbd0}
\sum_{Z \subset \Lambda_L} \left| \left[ \partial_x \partial_y V_Z \right] ( \mathrm{x}) \right| \, \leq \, C_2 F_{\mu_2} \left( d(x,y) \right) \, ,
\end{equation}
holds for all points $\mathrm{x} \in \mathcal{X}_{\Lambda_L}$.

Lastly, we need the quantities that arise in our iteration scheme to be well-defined.
For this we assume
\newline v) For each $Z \subset \Lambda_L$, the first order partial derivatives of $V_Z$ are integrable.
By this we mean that given $Z \subset \Lambda_L$ and $z \in Z$, the function $\partial_zV( \cdot ; Z)$ is
in $L^1( \mathbb{R}^Z)$ with respect to Lebesgue measure. In this case, the Fourier transform of these
functions exists, and we will write
\begin{equation}
\widehat{\partial_zV}(r;Z) = \frac{1}{(2 \pi)^{|Z|}} \int_{\mathbb{R}^Z} e^{-i r \cdot q} \, \partial_zV(q;Z)\, \rd q \, ,
\end{equation}
for any $r \in \mathbb{R}^Z$.
\newline vi) For each $Z \subset \Lambda_L$, we assume that the Fourier inversion formula holds for all
first order partial derivatives of $V_Z$. Thus, for any $q \in \mathbb{R}^Z$,
\begin{equation}
\partial_zV(q ; Z) = \int_{\mathbb{R}^Z} e^{i r \cdot q} \, \widehat{\partial_zV}(r ; Z) \, \rd r \, ,
\end{equation}
and therefore, we will write
\begin{equation} \label{eq:inversion}
\left[ \partial_z V_Z \right] ( \mathrm{x}) = \int_{\mathbb{R}^Z} \left[W(r \cdot \delta_Z) \right]( \mathrm{x}) \, \widehat{ \partial_zV}(r ; Z) \, \rd r \, ,
\end{equation}
where the function $r \cdot \delta_Z : \Lambda_L \to \mathbb{R}$ is given by
\begin{equation}
[r \cdot \delta_Z](x) = \sum_{z \in Z} r_z \, \delta_z(x) \quad \mbox{hence} \quad  \left[W(r \cdot \delta_Z) \right]( \mathrm{x}) = \exp \left[ i \sum_{z \in Z} r_z q_z \right] \, ,
\end{equation}
as required.
\newline vii) There exist numbers $C_3 \geq 0$ and $\mu_3 \geq 0$ such that given any points
$x,y \in \Lambda_L$, the bound
\begin{equation} \label{eq:fass}
\sum_{\substack{Z \subset \Lambda_L \\x,y \in Z}} \int_{\mathbb{R}^Z} |r| \cdot \left| \widehat{ \nabla V}(r ; Z) \right| \, dr \, \leq \, C_3 \, F_{\mu_3} \left( d(x,y) \right) \, .
\end{equation}
Here the vector-valued function $\widehat{\nabla V}( \cdot ; Z) : \mathbb{R}^Z \to \mathbb{C}^Z$ has components 
$\widehat{\partial_zV}( \cdot; Z)$ for each $z \in Z$. The number $|r|$, corresponding to some $r \in \mathbb{R}^Z$, is taken
as $|r| = \sum_{z \in Z}|r_z|$, but, as is seen in the proof below, any norm on $\mathbb{R}^Z$ satisfying $|r_z| \leq\| r \|$ will suffice.
As will become apparent below, we interpret the function $F_{\mu_3}$ in assumption vii) as our crucial 
estimate on the range of the interactions.

We now state our most general result.
\begin{thm}\label{thm:anharmms}
Let $X$ and $Y$ be finite subsets of $\mathbb{Z}^{\nu}$ and take $L_0 \geq 1$ to be the smallest integer
such that $X,Y \subset \Lambda_{L_0}$. For any $L \geq L_0$ and $t \in \mathbb{R}$, let $\alpha_t^L$ denote the
dynamics corresponding to the anharmonic Hamiltonian $H^{\Lambda_L}$ in (\ref{eq:defgenham}), and suppose that the 
perturbation satisfies assumptions i) - vii) listed above.  Then, for each $\epsilon >0$ and any functions 
$f, g \colon \Lambda_{L_0} \rightarrow \mathbb{C}$ with the support of  $f$ in $X$ and the support of  $g$ in $Y$, 
\begin{equation}\label{eq:anharmbdms}
 \left\| \left\{ \, \alpha_t^L (W(f)) , W (g) \, \right\} \right\|_{\infty} \leq  C \, \|f\|_{\infty} \| g \|_{\infty} \,e^{ \delta  |t|} \sum_{x\in X, y\in Y} \, F_{\mu_3}(d(x,y))
\end{equation}
holds for any $t \in \mathbb{R}$.   Here
\begin{equation} \label{eq:fc}
 C \, = \,  (1 + c_{\omega, \lambda} e^{\frac{( \mu_3 + \epsilon)}{2}} + c_{\omega, \lambda}^{-1} ) \, \sup_{s \geq 0}  \left[ (1+s)^{\nu + 1} e^{- \epsilon s} \right] \, .
\end{equation}
and
\begin{equation}\label{eq:vah2}
\delta = \delta( \epsilon) = (\mu_3+\epsilon) \, v_{\rm h}(\mu_3 + \epsilon) + C \, C_3 \, C_{\nu}^2  \, ,
\end{equation}
where $v_{\rm h}$ is as in (\ref{eq:defvh}), $C_{\nu}$ is in (\ref{eq:Cnu}), and $C_3$ is in (\ref{eq:fass}).
\end{thm}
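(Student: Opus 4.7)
The plan is to run the same interpolation argument that was used to prove Theorem \ref{thm:anharm2}, adding the extra bookkeeping demanded by the sum over subsets $Z \subset \Lambda_L$ in the perturbation. Fix $t>0$ and drop the superscript $L$. Set
\begin{equation*}
\Psi_t(s) \, = \, \left\{ \, \alpha_s \left( \alpha_{t-s}^h \left( W(f) \right) \right) , W(g) \, \right\}.
\end{equation*}
Differentiating as in (\ref{eq:dalpha}) now yields a sum over $Z$, and the key inner Poisson bracket factors by Proposition \ref{prop:f_t} and the Leibnitz rule as
\begin{equation*}
\left\{ W(f_{t-s}), V_Z \right\} \, = \, - i \, \sum_{z \in Z} \mathrm{Im}[f_{t-s}(z)] \, W(f_{t-s}) \, \partial_z V_Z \, .
\end{equation*}
Solving the resulting linear ODE $\Psi_t'(s) = i \mathcal{L}_t(s) \Psi_t(s) + \mathcal{Q}_t(s)$ verbatim from (\ref{eq:dpsi}) produces the inequality
\begin{equation*}
\left\| \left\{ \alpha_t(W(f)), W(g) \right\} \right\|_\infty \, \le \, \left\| \left\{ \alpha_t^h(W(f)), W(g) \right\} \right\|_\infty \, + \, \sum_{Z \subset \Lambda_L} \sum_{z \in Z} \int_0^t \left| \mathrm{Im}[f_{t-s}(z)] \right| \, \left\| \left\{ \alpha_s(\partial_z V_Z), W(g) \right\} \right\|_\infty \rd s.
\end{equation*}

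Next I would invoke assumption vi) to rewrite $\partial_z V_Z$ via the Fourier inversion (\ref{eq:inversion}), turning the integrand into a superposition of brackets $\|\{\alpha_s(W(r \cdot \delta_Z)), W(g)\}\|_\infty$ with weights $|\widehat{\partial_z V}(r; Z)|$. The inequality is then ready to iterate. A single iteration combines three ingredients: (a) Corollary \ref{cor:weylharmbd} applied with parameter $\mu_3 + \epsilon$, which together with the analogue of (\ref{eq:FH}) controls $|\mathrm{Im}[f_{t-s}(z)]|$ and the harmonic bracket of $W(r\cdot\delta_Z)$ by weighted sums of $F_{\mu_3}$; (b) assumption vii), which absorbs the sum over $Z$ containing a specific pair of sites into a single $C_3 F_{\mu_3}(d(\cdot,\cdot))$; and (c) the convolution bound (\ref{eq:Fconv}), which collapses each intermediate chain $F_{\mu_3}(d(x, z)) F_{\mu_3}(d(z, z')) F_{\mu_3}(d(z', y))$ into $C_\nu^2 F_{\mu_3}(d(x, y))$. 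Putting these together, each iteration step contributes a multiplicative factor $C \, C_3 \, C_\nu^2$, so after $n$ iterations the correction is bounded by $(C \, C_3 \, C_\nu^2 \, t)^n / n!$ times $\|f\|_\infty \|g\|_\infty \, e^{(\mu_3 + \epsilon) v_{\rm h}(\mu_3 + \epsilon) t} \sum_{x \in X, y \in Y} F_{\mu_3}(d(x,y))$, and summing the series gives the exponential factor $e^{C C_3 C_\nu^2 |t|}$ in (\ref{eq:vah2}).

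To discard the remainder after $m$ steps I would appeal to Lemma \ref{lem:pbbd} in Section \ref{sec:app}, whose hypotheses are precisely assumptions i)--iv); it provides an a priori bound of the schematic form $\|\{\alpha_s(W(r \cdot \delta_Z)), W(g)\}\|_\infty \le C_1 |Y| |r| \|g\|_\infty e^{C_2 t^2}$, uniform in $0 \le s \le t$. Feeding this into the $(m+1)$-st remainder shows it is $O((C \, C_3 \, C_\nu^2 \, t)^{m+1}/(m+1)!)$, which tends to zero as $m \to \infty$, exactly as in (\ref{eq:error}). The main obstacle is the multi-site bookkeeping: at each iteration the propagated Weyl function switches from $W(r_k \cdot \delta_{Z_k})$ to $W(r_{k+1} \cdot \delta_{Z_{k+1}})$, and one must keep track of two distinguished sites of each intermediate subset $Z_k$ (the one at which the gradient is evaluated and the one at which the harmonic propagation is sampled) in order to apply assumption vii) to the correct pair before collapsing the chain by (\ref{eq:Fconv}). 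Once the combinatorics is written out cleanly, the rest is the arithmetic of assembling the constants (\ref{eq:fc}) and (\ref{eq:vah2}).
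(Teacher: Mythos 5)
Your proposal follows essentially the same route as the paper's proof: the same interpolation with $\Psi_t(s)$, the same Fourier representation (\ref{eq:inversion}) of $\partial_z V_Z$, the same iteration using Corollary \ref{cor:weylharmbd}, assumption vii), and the convolution bound (\ref{eq:Fconv}) giving the factor $C\,C_3\,C_\nu^2$ per step, and the same appeal to Lemma \ref{lem:pbbd} to kill the remainder. Your remark about tracking two distinguished sites of each intermediate subset $Z_k$ is exactly the $z_{j,1},z_{j,2}$ bookkeeping carried out in the paper's estimate of $\tilde{a}_n$, so the argument is correct as outlined.
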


One important difference between the bound we prove in Theorem~\ref{thm:anharmms} above, in contrast to the
one proven in Theorem~\ref{thm:anharm2}, is that the spatial decay rate in (\ref{eq:anharmbdms}) can be no greater
than the rate $\mu_3$ appearing in (\ref{eq:fass}). If $\mu_3 >0$, then there is a corresponding velocity for this
anharmonic system
\begin{equation}
v_{\rm ah}( \epsilon) = \left( 1 + \frac{ \epsilon}{ \mu_3} \right) v_{\rm h}( \mu_3 + \epsilon) + \frac{C \, C_3 \, C_{\nu}^2}{ \mu_3} \, .
\end{equation}
Since the case of $\mu_3 = 0$ represents only polynomial decay in the interaction range, as measured by
(\ref{eq:fass}), the bound in (\ref{eq:anharmbdms}) at most decays polynomially in distance between the supports of $f$ and $g$ as well.

\begin{example}
To clarify the general assumptions on the perturbation introduced above, we will
consider a simple model with pair interactions generated by a single function. One can compare 
this example with the single site, anharmonic Hamiltonian analyzed in Section~\ref{sec:anharmss}. 
Let $V : \mathbb{R}^2 \to \mathbb{R}$ be given and fix some number $\mu \geq 0$. 
For each $L \geq 1$ and any $Z \subset \Lambda_L$, define
\begin{equation}
V( \cdot ; Z) \, = \, \left\{ \begin{array}{cc} F_{\mu} \left( d(z_1, z_2 ) \right) V( \cdot ) & \mbox{if } Z = \{ z_1, z_2 \} \, , \\
0 & \mbox{otherwise},
\end{array} \right.
\end{equation}
and thereby, the anharmonic Hamiltonian
\begin{equation} \label{eq:pairham}
H^{\Lambda_L} = H_h^{\Lambda_L} + \sum_{z_1, z_2 \in \Lambda_L} V_{\{z_1, z_2 \}} \, ,
\end{equation}
with $V_{\{z_1, z_2 \}}( \mathrm{x}) = F_{\mu} \left( d(z_1, z_2) \right) \cdot V(q_{z_1}, q_{z_2})$.
As one can easily check, the basic assumptions i) - iv) follow if 
$V$ has well-defined, second order partial derivatives and there exist
numbers $C_1$, $\tilde{C}_1$, and $C_2$ such that
\begin{equation}
\max_{i = 1,2} \left| \partial_i V(x,y) \right| \leq C_1 \left( |x| + |y| + \tilde{C}_1 \right)
\end{equation}
and 
\begin{equation}
\max_{i,j \in \{1,2 \}} \left| \partial_i \partial_j V(x,y) \right| \leq C_2 \, .
\end{equation}

If both first order partial derivatives of $V$ are integrable and satisfy the Fourier inversion
formula, then the condition vii) is satisfied when
\begin{equation}
\int_{\mathbb{R}} \int_{\mathbb{R}} \left( |x| + |y| \right) \left( \left| \widehat{\partial_1 V}(x,y) \right| + \left| \widehat{\partial_2 V}(x,y) \right|  \right) \, \rd x \, \rd y \, < \, \infty \, .
\end{equation}

Thus, under the above conditions, the model described by (\ref{eq:pairham}) satisfies the
assumptions of Theorem~\ref{thm:anharmms}, and hence the corresponding locality result (\ref{eq:anharmbdms}) is valid.
\end{example}

\noindent{\it{Proof of Theorem~\ref{thm:anharmms}. }}  
Much of the argument in the proof of Theorem~\ref{thm:anharm2} also applies here.
Again, we fix $L$, regard both Hamiltonians on the same volume, 
drop the dependence of each of the dynamics on $L$, and interpolate.
Let $t>0$ and set 
\begin{equation}\label{eq:phi}
\Phi_t(s) = \{ \alpha_s(\alpha^h_{t-s} (W(f))), W(g) \} \, 
\end{equation}
for $0 \leq s \leq t$. The calculation
\begin{eqnarray}
\frac{\rd}{\rd s} \alpha_s(\alpha^h_{t-s} (W(f))) & = & \sum_{Z \subset \Lambda_L}  \alpha_s \left( \left\{ \alpha^h_{t-s} (W(f)), V_Z  \right\} \right) \nonumber \\
& = &  - i \sum_{Z \subset \Lambda_L} \sum_{z \in Z} {\rm Im} \left[ f_{t-s}(z) \right]  \alpha_s \left(W(f_{t-s}) \right) \cdot \alpha_s \left( \partial_z V_Z  \right) \, , 
\end{eqnarray}
follows readily, and therefore, we derive a differential equation analogous to (\ref{eq:dpsi}); namely
\begin{eqnarray} \label{eq:dphi}
\frac{\rd}{\rd s} \Phi_t(s)  =  i \tilde{\mathcal{L}}_t(s) \Phi_t(s) \, + \, \tilde{\mathcal{Q}}_t(s),  
\end{eqnarray}
where
\begin{equation}
\begin{split}
\tilde{\mathcal{L}}_t(s) & =   - \sum_{Z \subset \Lambda_L} \sum_{z \in Z} {\rm Im} \left[ f_{t-s}(z) \right] \alpha_s( \partial_z V_Z ) \, ,  \\
\tilde{\mathcal{Q}}_t(s) & =  -  i \sum_{Z \subset \Lambda_L}  \sum_{z \in Z} {\rm Im} \left[ f_{t-s}(z) \right] \alpha_s \left( \alpha_{t-s}^h(W(f)) \right) \left\{ \alpha_s ( \partial_z V_Z ) , W(g) \right\}\, .
\end{split}
\end{equation}
Arguing as before, we arrive at the bound
\begin{eqnarray} \label{eq:bdms1}
\left\| \left\{ \alpha_t(W(f)), W(g) \right\} \right\|_{\infty} & \leq & \left\| \left\{ \alpha_t^h(W(f)), W(g) \right\} \right\|_{\infty}   \\
& \mbox{ } & + \sum_{Z \subset \Lambda_L} \sum_{z \in Z} \int_0^t \, \left| {\rm Im} \left[ f_{t-s}(z) \right] \right|  \left\| \left\{ \alpha_s ( \partial_z V_Z ) , W(g) \right\} \right\|_{\infty} \, \rd s \, . \nonumber
\end{eqnarray}

Inserting (\ref{eq:inversion}) into (\ref{eq:bdms1}) and using the harmonic bounds from Corollary~\ref{cor:weylharmbd} with $\mu = \mu_3$, we find that
\begin{eqnarray}\label{eq:msbdtoit}
\begin{split}
& \left\| \left\{ \alpha_t(W(f)), W(g) \right\} \right\|_{\infty}   \leq  C \, \| f \|_{\infty} \, \| g \|_{\infty}
\, e^{ \hat{v} t} \, \sum_{x \in X, y \in Y} F_{\mu_3}(d(x,y))  
\\
 & \quad  + C \, \|f\|_{\infty} \sum_{ Z \subset \Lambda_L}  \sum_{z \in Z} \sum_{x \in X}  F_{\mu_3}(d(x,z)) \int_0^t  \, e^{\hat{v}(t-s)} \, \int_{\mathbb{R}^{Z}} | \widehat{ \partial_z V}(r ; Z) 
| \big\| \left\{ \alpha_s \left( W(r \cdot \delta_Z) \right) , W(g) \right\} \big\|_{\infty} \, \rd r \, \rd s \, , 
\end{split}
\end{eqnarray}
with $C$ as in (\ref{eq:fc}), and we have set $\hat{v} = (\mu_3+ \epsilon)v_{\rm h}( \mu_3 + \epsilon)$ for
notational convenience.

After iterating (\ref{eq:msbdtoit}) $m \geq 1$ times, we find that
\begin{equation}
\left\| \left\{ \alpha_t(W(f)), W(g) \right\} \right\|_{\infty}  \leq C \, \| f \|_{\infty} \, \| g \|_{\infty}
\, e^{\hat{v} t} \, \sum_{x \in X, y \in Y} \sum_{n=0}^m \tilde{a}_n(x,y;t) \, + \, \tilde{R}_{m+1}(t), 
\end{equation}
where 
\begin{equation} \label{eq:ta0}
\tilde{a}_0(x,y;t) = F_{\mu_3}(d(x,y)) \, ,
\end{equation}
\begin{eqnarray}  \label{eq:ta1}
\tilde{a}_1(x,y;t)  & = &   C t \sum_{Z \subset \Lambda_L} \sum_{z_1, z_2 \in Z}  \int_{\mathbb{R}^{Z}} \| r \cdot \delta_Z \|_{\infty}  \cdot |\widehat{ \partial_{z_1} V}(r ; Z)| \, dr \, F_{\mu_3} \left( d(x,z_1) \right) F_{\mu_3} \left( d(z_2, y) \right) \nonumber \\
& \leq &  C t  \sum_{z_1, z_2 \in \Lambda_L}   F_{\mu_3} \left( d(x,z_1) \right) F_{\mu_3} \left( d(z_2, y) \right) \sum_{\substack{Z \subset \Lambda_L\\z_1,z_2 \in Z}} \int_{\mathbb{R}^{Z}}  |r| \cdot | \widehat{ \nabla V}(r; Z)| \, \rd r \, \nonumber \\
& \leq & C \, C_3  \, t  \sum_{z_1, z_2 \in \Lambda_L}   F_{\mu_3} \left( d(x,z_1) \right)  \, F_{\mu_3} \left( d(z_1,z_2) \right) \, F_{\mu_3} \left( d(z_2, y) \right)  \nonumber \\
& \leq & C \, C_3 \, C_{\nu}^2 t  F_{\mu_3} \left( d(x,y) \right) \, ,
\end{eqnarray}
and in general,
\begin{equation}  \label{eq:tan}
\begin{split}
&\tilde{a}_n(x,y;t)   =    \frac{(C t)^n}{n!} \sum_{Z_1, Z_2, \cdots, Z_n \subset \Lambda_L} \sum_{z_{1,1}, z_{1,2} \in Z_1} \sum_{z_{2,1}, z_{2,2} \in Z_2} \cdots \sum_{z_{n,1}, z_{n,2} \in Z_n}  \left( \prod_{j=1}^n \int_{\mathbb{R}^{Z_j}} \| r_j \cdot \delta_{Z_j} \|_{\infty}  \left| \widehat{ \partial_{z_{j,1}} V}(r_j; Z_j) \right| \, \rd r_j \right) 
\\
& \quad \times  F_{\mu_3} \left( d(x,z_{1,1}) \right) \cdot F_{\mu_3} \left( d(z_{1,2}, z_{2,1})\right) \cdots F_{\mu_3} \left( d(z_{n,2}, y) \right) 
\\
& \leq   \frac{(C t)^n}{n!}  \sum_{z_{1,1}, z_{1,2} \in \Lambda_L} \sum_{z_{2,1}, z_{2,2} \in \Lambda_L} \cdots \sum_{z_{n,1}, z_{n,2} \in \Lambda_L}  F_{\mu_3} \left( d(x,z_{1,1}) \right) \cdot F_{\mu_3} \left( d(z_{1,2}, z_{2,1})\right) \cdots F_{\mu_3} \left( d(z_{n,2}, y) \right) 
\\
&  \quad \times \prod_{j=1}^n \sum_{\substack{Z_j \subset \Lambda_L\\z_{j,1},z_{j,2} \in Z_j}} \int_{\mathbb{R}^{Z_j}}  |r_j| \cdot | \widehat{ \nabla V}(r_j; Z_j)| \, \rd r_j \, 
\\
& \leq    \frac{(C \, C_3 \,  t)^n}{n!}  \sum_{z_{1,1}, z_{1,2}, z_{2,1}, z_{2,2}, \cdots, z_{n,1}, z_{n,2} \in \Lambda_L}  F_{\mu_3} \left( d(x,z_{1,1}) \right) \cdot F_{\mu_3} \left( d(z_{1,1}, z_{1,2})\right) \cdot F_{\mu_3} \left( d(z_{1,2}, z_{2,1})\right) \cdots F_{\mu_3} \left( d(z_{n,2}, y) \right) 
\\
& \leq  \frac{(C \, C_3 \, C_{\nu}^2 \, t)^n}{n!}   F_{\mu_3} \left( d(x,y) \right) \, ,
\end{split}
\end{equation}
for any $1 \leq n \leq m$. As before, with $t>0$ fixed, the remainder term $\tilde{R}_{m+1}(t)$ converges to
zero as $m \to \infty$. Thus we have proven (\ref{eq:anharmbdms}) as claimed.
\hfill \qed
  
%
%
%
%

\setcounter{equation}{0}

\section{A Priori Solution Estimates} \label{sec:app}

In this section, we will prove a variety of a priori estimates which will be 
useful in our proofs of the main results.  The underlying argument which
facilitates most of the lemmas below is the well-known Gronwall inequality.
We state and prove a version of this estimate which is tailored to
the present work. A more general bound of this type appears, e.g. in \cite{Agarwal07}. 

\begin{lemma}[Gronwall Inequality] \label{lem:gron}
Let $u: \mathbb{R} \to \mathbb{C}$ satisfy
\begin{equation}\label{eq:ineq}
|u(t)| \leq \alpha(t) + \int_a^t f(t, s)\, | u(s)| \, \rd s
\end{equation}
for all t in [a,b]. If $\alpha$ is non-negative and non-decreasing and $f$ is non-negative and 
continuous with $f( \cdot , s)$ nondecreasing for each fixed $s \in [a,b]$, then
\begin{equation}\label{eq:bdd}
|u(t)| \leq \alpha(t) \exp \left(  \int_a^t f(t,s)\, \rd s\right) \, .
\end{equation}
for all $t$ in $[a,b]$.
\end{lemma}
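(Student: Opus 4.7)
The plan is to reduce the statement, which involves a kernel $f(t,s)$ depending on both arguments, to the classical Gronwall inequality in which the kernel depends only on $s$, and then to apply a standard integrating-factor argument. The monotonicity of $f$ in its first slot is precisely what makes this reduction possible.

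First, I would fix an arbitrary $T \in [a,b]$ and exploit the hypothesis that $f(\,\cdot\,, s)$ is nondecreasing for each fixed $s$. For every $t \in [a,T]$ this gives $f(t,s) \le f(T,s)$ on $[a,t]$, and since $\alpha$ is nondecreasing as well, the hypothesis (\ref{eq:ineq}) yields the simpler estimate
\begin{equation*}
|u(t)| \,\le\, \alpha(T) \,+\, \int_a^t f(T,s)\,|u(s)|\,\rd s, \qquad t \in [a,T].
\end{equation*}
In this inequality the coefficient $f(T,\,\cdot\,)$ and the free term $\alpha(T)$ no longer depend on the running variable $t$, so we are reduced to the textbook form of Gronwall.

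Next, I would handle this reduced inequality by the usual trick: set
\begin{equation*}
v(t) \,=\, \int_a^t f(T,s)\,|u(s)|\,\rd s,
\end{equation*}
so that $v$ is absolutely continuous with $v(a)=0$ and $v'(t) = f(T,t)\,|u(t)| \le f(T,t)\bigl(\alpha(T) + v(t)\bigr)$. Multiplying through by the integrating factor $\exp\bigl(-\int_a^t f(T,s)\,\rd s\bigr)$ turns the left-hand side into an exact derivative, and integrating from $a$ to $t$ gives
\begin{equation*}
v(t) \,\le\, \alpha(T)\left[\exp\!\left(\int_a^t f(T,s)\,\rd s\right) - 1\right].
\end{equation*}
Substituting this back into the reduced inequality and evaluating at $t=T$ yields
\begin{equation*}
|u(T)| \,\le\, \alpha(T)\,\exp\!\left(\int_a^T f(T,s)\,\rd s\right),
\end{equation*}
which is exactly (\ref{eq:bdd}) at the point $T$. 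Since $T \in [a,b]$ was arbitrary, the bound holds on the whole interval.

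The only subtle point is the first step: the classical integrating-factor manipulation breaks if the kernel also depends on $t$, because then $v(t)=\int_a^t f(t,s)|u(s)|\,\rd s$ would pick up an additional term $\int_a^t \partial_t f(t,s)|u(s)|\,\rd s$ upon differentiation. The monotonicity assumption $f(\,\cdot\,,s)$ nondecreasing is the essential ingredient that sidesteps this obstacle by allowing us to freeze the first argument at the right endpoint $T$ before running the standard argument. Continuity of $f$ ensures that the integrals appearing above are well defined, and nothing else about $f$ is needed.
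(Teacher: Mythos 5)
Your proposal is correct and follows essentially the same route as the paper: both proofs freeze the first argument of $f$ (and of $\alpha$) at the right endpoint, using the monotonicity hypotheses, and then run the classical one-variable Gronwall argument on the resulting inequality; your integrating-factor computation is just the explicit form of the differential inequality $m'(t) \le f(t_0,t)\,m(t)$ used in the paper.
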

\begin{proof}
We prove (\ref{eq:bdd}) pointwise. Let $t_0 \in [a,b]$ and observe that 
\begin{equation}\label{eq:ineqt0}
|u(t)| \leq \alpha(t_0) + \int_a^t f(t_0, s) \, |u(s)| \, \rd s \, ,
\end{equation}
holds for all $t \in [a,t_0]$.  Define
\begin{equation}
m(t) = \alpha(t_0) + \int_a^{t} f(t_0, s) |u(s)| \, \rd s \, .
\end{equation}
Clearly, $|u(t)| \leq m(t)$ and the bound
\begin{equation} \label{eq:dm}
m'(t) = f(t_0, t)\, |u(t)| \leq f(t_0, t) m(t) \, ,
\end{equation}
readliy implies
\begin{equation}
|u(t)| \leq m(t) \leq m(0) \, {\rm exp} \left( \int_a^t f(t_0, s) \, \rd s \, \right) \, ,
\end{equation}
for any $t \in [a, t_0]$. Taking $t=t_0$, we have proven (\ref{eq:bdd}).  
\end{proof}

The applications we have in mind concern bounding the solutions of our Hamiltonian flows.
Recall that our general, finite volume, multi-site Hamiltonian, $H^{\Lambda_L} : \mathcal{X}_{\Lambda_L} \to \mathbb{R}$, 
is of the form
\begin{equation} \label{eq:defham2}
H^{\Lambda_L} = H_h^{\Lambda_L} + \sum_{Z \subset \Lambda_L} V_Z \, ,
\end{equation}
and we need a variety of assumptions on the perturbations $V_Z$ to prove our estimates.

We begin with a basic proof of boundedness for the flow $\Phi_t : \mathcal{X}_{\Lambda_L} \to \mathcal{X}_{\Lambda_L}$
corresponding to (\ref{eq:defham2}). As is demonstrated in \cite{LLL}, boundedness follows if the perturbation is
dominated by the harmonic part. For the sake of completeness, we include this argument here.

We assume the perturbation in (\ref{eq:defham2}) above satisfies: 
There exist numbers $C_1 \geq 0$, $\tilde{C_1} \geq 0$, and $\mu_1 \geq 0$ such that
\begin{equation} \label{eq:harmdom}
\left( \sum_{Z \subset \Lambda_L}  \left| \partial_x V_Z ( \mathrm{x}) \right| \right)^2 \leq C_1 \sum_{y \in \Lambda_L} (q_y^2 + \tilde{C_1}) F_{\mu_1} \left(d(x,y) \right) \, ,
\end{equation}
for each $x \in \Lambda_L$ and any $\mathrm{x} \in \mathcal{X}_{\Lambda_L}$. 

\begin{lemma} \label{lem:solbd}
Fix $L \geq 1$ and let $\Phi_t$ denote the flow corresponding to the Hamiltonian $H^{\Lambda_L}$ defined in (\ref{eq:defham2}) above.
If the perturbation satisfies (\ref{eq:harmdom}) described above, then for any $\mathrm{x} \in \mathcal{X}_{\Lambda_L}$ the components of
the flow $\Phi_t( \mathrm{x}) = \{ (q_x(t), p_x(t) ) \}_{x \in \Lambda_L}$ satisfy
\begin{equation} \label{eq:solbd}
\sup_{x \in \Lambda_L} \max \left( |q_x(t) |, \, |p_x(t)| \, \right)  \leq K_1 \, \exp( K_2 \, t),
\end{equation}
where
\begin{equation}\label{eq:K1}
K_1 = K_1( \mathrm{x}) = \sqrt{ \sup_{x \in \Lambda_L} \left( p_x^2(0) + q_x^2(0) + \tilde{C_1} \right)} \, ,
\end{equation}
and
\begin{equation}
K_2 \, = \,  \left| \omega^2 + 2 \sum_{j=1}^{\nu} \lambda_j - 1 \right| + 4 \sum_{j=1}^{\nu} \lambda_j  +  \frac{1}{2} + \frac{C_1}{2} \sum_{x \in \Lambda_L} F_{\mu_1} \left( d(0,x) \right) \, .
\end{equation}
\end{lemma}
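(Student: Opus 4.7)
}
The plan is to apply the Gronwall inequality (Lemma above) to an energy-like quantity constructed from the solution components. Specifically, I would introduce, for each $x \in \Lambda_L$,
\begin{equation}
\phi_x(t) \, = \, q_x(t)^2 + p_x(t)^2 + \tilde{C}_1 \, , \qquad M(t) \, = \, \sup_{x \in \Lambda_L} \phi_x(t) \, ,
\end{equation}
and show that $M$ satisfies a linear integral inequality of the form $M(t) \leq M(0) + 2K_2 \int_0^t M(s)\, \rd s$. Since we consider only finite $\Lambda_L$, the $\phi_x$ are $C^1$, so Hamilton's equations for $H^{\Lambda_L}$ give
\begin{equation}
\dot{\phi}_x(t) = 4\bigl(1 - \omega^2 - 2\textstyle\sum_j \lambda_j\bigr) q_x p_x + 4\sum_j \lambda_j \bigl(q_{x+e_j} + q_{x-e_j}\bigr) p_x - 2 p_x \sum_{Z \subset \Lambda_L} \partial_x V_Z(\mathrm{x}(t)) \, .
\end{equation}

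Next I would estimate each of the three terms using the elementary bound $2|ab| \leq a^2+b^2$ so as to produce a pointwise bound in terms of $M(t)$. The harmonic on-site contribution yields $2|1 - \omega^2 - 2\sum_j \lambda_j|(q_x^2+p_x^2) \leq 2|\omega^2 + 2\sum_j \lambda_j -1|\, M(t)$, while the coupling terms give $8(\sum_j \lambda_j) M(t)$ after bounding $q_{x\pm e_j}^2 + p_x^2 \leq \phi_{x \pm e_j} + \phi_x \leq 2 M(t)$. For the anharmonic contribution, Cauchy--Schwarz (or AM--GM) together with assumption (\ref{eq:harmdom}) gives
\begin{equation}
2|p_x| \, \sum_Z |\partial_x V_Z(\mathrm{x})| \, \leq \, p_x^2 \, + \, \Bigl(\sum_Z |\partial_x V_Z(\mathrm{x})| \Bigr)^2 \, \leq \, M(t) + C_1 \sum_{y \in \Lambda_L} \phi_y(t) F_{\mu_1}(d(x,y)) \, ,
\end{equation}
and the last sum is bounded by $M(t) \sum_{y \in \Lambda_L} F_{\mu_1}(d(0,y))$ using translation invariance of the torus distance. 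Collecting constants, one obtains $|\dot{\phi}_x(t)| \leq 2 K_2 M(t)$ with precisely the $K_2$ written in the statement.

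From here the conclusion is quick: integrating and taking the supremum over $x$ (which is harmless in the integral form) gives the integral inequality above, and Lemma \ref{lem:gron} applied with $\alpha \equiv M(0)$ and $f \equiv 2 K_2$ produces $M(t) \leq M(0) \exp(2 K_2 t)$. Taking square roots, using $\max(|q_x(t)|, |p_x(t)|) \leq \sqrt{\phi_x(t)} \leq \sqrt{M(t)}$ and identifying $\sqrt{M(0)} = K_1$ as in (\ref{eq:K1}), yields (\ref{eq:solbd}). The argument for $t<0$ is symmetric: running time backwards interchanges the roles and reproduces the same bound with $|t|$ replacing $t$.

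The main (minor) obstacle I foresee is not analytic but bookkeeping: one must verify that each contribution from the expansion of $\dot\phi_x$ can be absorbed into $M(t)$ with the exact numerical constants that appear in $K_2$, in particular that the AM--GM splittings of the off-diagonal coupling and of the anharmonic cross-term produce the coefficients $4\sum_j \lambda_j$ and $1/2$ (after dividing by $2$) respectively. Everything else --- summability of $\sum_y F_{\mu_1}(d(0,y))$, translation invariance on the torus, finiteness of $\Lambda_L$ so that $M(t)$ is a maximum of finitely many smooth functions --- is essentially automatic.
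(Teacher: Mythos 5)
Your proposal is correct and follows essentially the same route as the paper: the same energy quantity $E_x(t)=q_x^2(t)+p_x^2(t)+\tilde C_1$, the same $2|ab|\leq a^2+b^2$ splittings of the harmonic, coupling, and anharmonic terms (yielding exactly $2K_2$), and the same integral-form Gronwall argument, with your direct bound by $M(t)=\sup_x E_x(t)$ being just a streamlined version of the paper's bookkeeping via the matrix $A$ and its $\ell^\infty$ operator norm, since $\|A\|_\infty=2K_2$.
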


\begin{proof} Fix $L \geq 1$, take $x \in \Lambda_L$, and choose $\mathrm{x} \in \mathcal{X}_{\Lambda_L}$. 
Consider the function defined by setting
\begin{equation}
E_x(t) = p_x^2(t) + q_x^2(t) +  \tilde{C_1} \,,
\end{equation}
where $\tilde{C_1}>0$ is the number appearing in (\ref{eq:harmdom}).
From Hamilton's equations, we have that
\begin{eqnarray}
\dot{E}_x(t) & = & 2 p_x(t) \dot{p}_x(t) \, + \, 2 q_x(t) \dot{q}_x(t)  \nonumber \\
& = & - 4 \left( \omega^2 + 2 \sum_{j=1}^{\nu}\lambda_j  -1 \right) \, p_x(t) \, q_x(t) \, + \, 4 p_x(t) \,  \sum_{j=1}^{\nu} \lambda_j \left( q_{x+e_j}(t) \, + \, q_{x-e_j}(t) \right) \nonumber \\
& \mbox{ } & \quad - 2 p_x(t) \sum_{Z \subset \Lambda_L} \left[ \partial_x V_Z \right] \left( \Phi_t( \mathrm{x}) \right) \, ,
\end{eqnarray}
and therefore,
\begin{equation} \label{eq:ebd}
\left| \dot{E}_x(t) \right| \, \leq \, \sum_{y \in \Lambda_L} A_{x,y} E_y(t) \, = \, \left( AE(t) \right)_x
\end{equation}
where the $| \Lambda_L| \times | \Lambda_L|$ matrix $A = (A_{x,y})$ is given by
\begin{equation}
A_{x,y} =  \left\{ \begin{array}{cc} 2 \left| \omega^2 + 2 \sum_{j=1}^{\nu} \lambda_j - 1 \right| + 4 \sum_{j=1}^{\nu} \lambda_j  +1 + C_1 \, F_{\mu_1}(0) & \mbox{if } y=x \, , \\
2 \lambda_j + C_1 \, F_{\mu_1}(1) & \mbox{if } y = x \pm e_j \, , \\
C_1 \, F_{\mu_1} \left( d(x,y) \right) & \mbox{otherwise,}  \end{array} \right.
\end{equation}
and $(AE(t))_x$ is the $x$-th component of this vector.
Denote by $E$ the vector-valued function whose components are $E_x$ and
equip $\mathbb{R}^{|\Lambda_L|}$ with the sup-norm $\| \cdot \|_{\infty}$.
With (\ref{eq:ebd}) it is clear that
\begin{equation}
\left\| \dot{E}(t) \right\|_{\infty} \, \leq \, \left\| AE(t) \right\|_{\infty} \, ,
\end{equation}
and therefore,
\begin{equation}
\left\| E(t) \right\|_{\infty} \leq \left\| E(0) \right\|_{\infty} \, + \, \int_0^t \left\| \dot{E}(s)  \right\|_{\infty} \, \rd s \leq \left\| E(0) \right\|_{\infty} \, + \, \int_0^t \left\| AE(s)  \right\|_{\infty} \, \rd s \, .
\end{equation}
Letting $u(t) = \| E(t) \|_{\infty}$, Lemma~\ref{lem:gron} implies
\begin{equation}
\max \left( |q_x(t)|^2, |p_x(t)|^2 \right) \leq \left\| E(t) \right\|_{\infty} \, \leq \, \left\| E(0) \right\|_{\infty} \, \exp \left( \|A\|_{\infty} \, t \right),
\end{equation}
from which (\ref{eq:solbd}) is clear.
\end{proof}

As will become clear in the proof of Lemma~\ref{lem:pbbd} below, the main quantities of interest for us are the
derivatives of the components of the flow with respect to the initial conditions. The next lemma provides
explicit estimates on these functions. To prove it we need the following additional assumption on our perturbation.
Assume there exists constants $C_2 \geq 0$ and $\mu_2 \geq 0$ for which: given any $L \geq 1$ and any
pair $x,y \in \Lambda_L$, the bound
\begin{equation} \label{eq:2derbd}
\sum_{Z \subset \Lambda_L} \left| \left[ \partial_x \partial_y V_Z \right] ( \mathrm{x}) \right| \, \leq \, C_2 F_{\mu_2} \left( d(x,y) \right) \, ,
\end{equation}
holds for all points $\mathrm{x} \in \mathcal{X}_{\Lambda_L}$.

\begin{lemma} \label{lem:dsolbd}
Fix $L \geq 1$ and let $\Phi_t$ denote the flow corresponding to the Hamiltonian $H^{\Lambda_L}$ defined in (\ref{eq:defham2}) above.
If the perturbation satisfies (\ref{eq:harmdom}) and (\ref{eq:2derbd}), then for any $\mathrm{x} \in \mathcal{X}_{\Lambda_L}$ the components of
the flow $\Phi_t( \mathrm{x}) = \{ (q_x(t), p_x(t) ) \}_{x \in \Lambda_L}$ satisfy
\begin{equation} \label{eq:dsolbdq}
\sup_{x,y \in \Lambda_L} \max \left( \left| \frac{\partial q_x(t)}{ \partial q_y(0)} \right|, \, \left| \frac{\partial q_x(t)}{ \partial p_y(0)} \right| \, \right)  \leq \max(1,2t) \, \exp( K \, t^2),
\end{equation}
and
\begin{equation} \label{eq:dsolbdp}
\sup_{x,y \in \Lambda_L} \max \left(\left| \frac{\partial p_x(t)}{ \partial q_y(0)} \right|, \, \left| \frac{\partial p_x(t)}{ \partial p_y(0)} \right| \, \right)  \leq 1 + t \left(K + 2 \sum_{j=1}^{\nu} \lambda_j \right) \, \max(1,2t) \, \exp( K \, t^2),
\end{equation}
where
\begin{equation}
K = 2 \omega^2 + 8 \sum_{j=1}^{\nu} \lambda_j + C_2 \sum_{x \in \Lambda_L} F_{\mu_2} \left( d(0,x) \right) \, .
\end{equation}
\end{lemma}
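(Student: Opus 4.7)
The plan is to derive the linear variational system satisfied by the derivatives of the flow with respect to initial data and then iterate the Gronwall inequality of Lemma~\ref{lem:gron}. Fix $L$ and $\mathrm{x} \in \mathcal{X}_{\Lambda_L}$. Lemma~\ref{lem:solbd} rules out finite-time blow-up of $\Phi_t(\mathrm{x})$, and together with the $C^2$ hypothesis on each $V_Z$ (assumption iii of Section~\ref{sec:anharmms}), standard ODE theory on the finite-dimensional phase space $\mathcal{X}_{\Lambda_L}$ implies that $\Phi_t$ is continuously differentiable in its initial condition on all of $\mathbb{R}$. In particular, the Jacobian entries
$$ A_{x,y}(t) = \frac{\partial q_x(t)}{\partial q_y(0)},\quad B_{x,y}(t) = \frac{\partial q_x(t)}{\partial p_y(0)},\quad C_{x,y}(t) = \frac{\partial p_x(t)}{\partial q_y(0)},\quad D_{x,y}(t) = \frac{\partial p_x(t)}{\partial p_y(0)} $$
are well-defined, with $A(0) = D(0) = I$ and $B(0) = C(0) = 0$.

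Differentiating Hamilton's equations term by term yields the coupled linear system
$$ \dot A_{x,y} = 2 C_{x,y}, \qquad \dot B_{x,y} = 2 D_{x,y}, $$
while $\dot C_{x,y}$ and $\dot D_{x,y}$ are each of the form
$$ -2\omega^2 M_{x,y} - 2\sum_{j=1}^{\nu} \lambda_j\!\left(2 M_{x,y} - M_{x+e_j,y} - M_{x-e_j,y}\right) - \sum_{Z \subset \Lambda_L} \sum_{z \in Z} \left[\partial_x \partial_z V_Z\right]\!\left(\Phi_t(\mathrm{x})\right) M_{z,y}, $$
with $M = A$ for $\dot C$ and $M = B$ for $\dot D$. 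The key input now is the second-derivative assumption (\ref{eq:2derbd}), which bounds $\sum_Z|\partial_x\partial_z V_Z(\mathrm{x})|$ by $C_2 F_{\mu_2}(d(x,z))$; since $F_{\mu_2} \in \ell^1(\mathbb{Z}^{\nu})$, this lets us absorb the perturbation contribution into a finite multiplicative constant. Setting $a(t)=\sup_{x,y}|A_{x,y}(t)|$ and $b(t)=\sup_{x,y}|B_{x,y}(t)|$, one obtains $|\dot C_{x,y}(t)| \leq K\, a(t)$ and $|\dot D_{x,y}(t)| \leq K\, b(t)$ with $K$ the constant in the statement.

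Integrating the $\dot C,\dot D$ equations once and then the $\dot A,\dot B$ equations once, we obtain the scalar integral inequalities
$$ a(t) \leq 1 + 2K \int_0^t (t-r) a(r)\,\rd r, \qquad b(t) \leq 2t + 2K \int_0^t (t-r) b(r)\,\rd r. $$
Applying Lemma~\ref{lem:gron} with $\alpha(t)=1$ (resp.\ $2t$) and $f(t,s)=2K(t-s)$, which is nonnegative, continuous, and nondecreasing in $t$, gives $a(t)\le \exp(Kt^2)$ and $b(t)\le 2t\exp(Kt^2)$; combined, this is precisely (\ref{eq:dsolbdq}). The bound (\ref{eq:dsolbdp}) then follows by re-integrating $\dot C_{x,y}$ and $\dot D_{x,y}$: the hopping terms contribute an explicit factor $2\sum_j \lambda_j$ separate from $K$, the perturbation terms contribute $K$, and the remaining time dependence is bounded using the crude estimate $\int_0^t \max(1,2s)\exp(Ks^2)\,\rd s \leq t\max(1,2t)\exp(Kt^2)$.

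The only delicate point is the justification of existence and $C^1$-regularity of the Jacobian $t\mapsto (A,B,C,D)(t)$ globally in time; once Lemma~\ref{lem:solbd} rules out blow-up and guarantees that the coefficient $\sum_{Z,z}[\partial_x\partial_z V_Z](\Phi_t(\mathrm{x}))$ is locally bounded along the trajectory, the remainder of the argument is a direct (if bookkeeping-heavy) application of Lemma~\ref{lem:gron}.
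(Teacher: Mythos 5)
Your proposal is correct and follows essentially the same route as the paper: the paper differentiates the twice-integrated form of Hamilton's equations to get the integral inequality with kernel $(t-s)$ and constant determined by assumption (\ref{eq:2derbd}), then applies Lemma~\ref{lem:gron}, which is exactly what you obtain by writing the variational ODE system and integrating twice; the treatment of the momentum derivatives by a single further integration also matches. The only difference is cosmetic (variational equations versus differentiating the integral equation), and your constants agree with, or slightly improve on, those in (\ref{eq:dsolbdq}) and (\ref{eq:dsolbdp}).
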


\begin{proof}
We begin with a proof of (\ref{eq:dsolbdq}).
Using Hamilton's equations, we find that
\begin{eqnarray} \label{eq:qxt}
q_x(t) \, - \, q_x(0) \, - \,  2 \, t \, p_x(0) \,  & = & 2 \, \int_0^t \int_0^s \dot{p}_x(r) \, \rd r \, \rd s    \\
& = &  - \,  4 \left( \omega^2 \, + \, 2 \sum_{j=1}^{\nu} \lambda_j \right) \int_0^t (t-s) \, q_x(s) \, \rd s \, \nonumber \\
& \mbox{ } & \quad  + \, 4 \, \sum_{j=1}^{\nu} \lambda_j \int_0^t (t-s) \left(q_{x + e_j}(s) + q_{x- e_j}(s) \right) \, \rd s \nonumber \\
& \mbox{ } & \quad \quad - \, 2 \,  \sum_{Z \subset \Lambda_L} \int_0^t (t-s) \left[ \partial_x V_Z \right]  \left( \Phi_s( \mathrm{x}) \right)  \, \rd s  \nonumber
\end{eqnarray}
and therefore
\begin{eqnarray}\label{eq:dqxq0}
\frac{\partial q_x(t)}{\partial q_y(0)}  & = &  \delta_x(y)  - \,  4 \left( \omega^2 \, + \, 2 \sum_{j=1}^{\nu} \lambda_j \right) \int_0^t (t-s) \, \frac{ \partial q_x(s)}{ \partial q_y(0)} \, \rd s  
\\
& \mbox{ } & \quad  + \, 4 \, \sum_{j=1}^{\nu} \lambda_j \int_0^t (t-s) \left( \frac{ \partial q_{x + e_j}(s)}{ \partial q_y(0)} \, + \, \frac{ \partial q_{x- e_j}(s)}{ \partial q_y(0)} \right) \, \rd s \nonumber 
\\
& \mbox{ } & \quad \quad - \, 2 \, \sum_{z \in \Lambda_L} \sum_{Z \subset \Lambda_L}  \int_0^t (t-s) \left[ \partial_z \partial_x V_Z \right]  \left( \Phi_s( \mathrm{x}) \right) \cdot \frac{ \partial q_z(s)}{ \partial q_y(0)}  \, \rd s  \nonumber \, .
\end{eqnarray}
More succinctly, we have found that
\begin{equation} \label{eq:matbdq}
\left| \frac{\partial q_x(t)}{\partial q_y(0)} \right|   \leq  \delta_x(y) + \sum_{z \in \Lambda_L} \int_0^t (t-s) A_{x,z} \left| \frac{ \partial q_z(s)}{ \partial q_y(0)} \right| \, \rd s
\end{equation}
where the $| \Lambda_L| \times | \Lambda_L|$ matrix $A = (A_{x,z})$ has entries
\begin{equation}
A_{x,z} =  \left\{ \begin{array}{cc} 4 \left(\omega^2 + 2 \sum_{j=1}^{\nu} \lambda_j \right) +2 C_2 F_{\mu_2}(0) & \mbox{if } z=x \, , \\
4 \lambda_j \, + \, 2 C_2 F_{\mu_2}(1) & \mbox{if } z = x \pm e_j \, , \\
2 C_2 F_{\mu_2} \left( d(x,z) \right)  & \mbox{otherwise.}  \end{array} \right.
\end{equation}

If, for each fixed $y \in \Lambda_L$, we denote by $\partial_y q : \mathbb{R} \to \mathbb{R}^{|\Lambda_L|}$ the vector-valued function
whose components are given by $\left| \frac{ \partial q_x(t)}{ \partial q_y(0)} \right|$, then (\ref{eq:matbdq}) implies 
\begin{equation}
\left\| \partial_y q(t) \right\|_{\infty}  \leq 1 +  \int_0^t (t-s) \left\| A \partial_y q(s) \right\|_{\infty} \, ds \, .
\end{equation}
Again letting $u(t) = \| \partial_y q(t) \|_{\infty}$, Lemma~\ref{lem:gron} yields the estimate
\begin{equation}
\left\| \partial_y q(t) \right\|_{\infty}  \leq \exp \left( \frac{\| A \|_{\infty} \, t^2}{2} \right) \, .
\end{equation}

Quite similarly, the bound
\begin{equation} \label{eq:matbdp}
\left| \frac{\partial q_x(t)}{\partial p_y(0)} \right|   \leq  2 t \, \delta_x(y) + \sum_{z \in \Lambda_L} \int_0^t (t-s) A_{x,z} \left| \frac{ \partial q_z(s)}{ \partial q_y(0)} \right| \, \rd s \, ,
\end{equation}
follows from (\ref{eq:qxt}) with the same matrix $A$. This proves (\ref{eq:dsolbdq}).

The bound for $p_x(t)$ follows from (\ref{eq:dsolbdq}). In fact, it is easy to see that
\begin{eqnarray} \label{eq:pxt}
p_x(t) & = & p_x(0)  - 2 \left( \omega^2 \, + \, 2 \sum_{j=1}^{\nu} \lambda_j \right) \int_0^t q_x(s) \, \rd s  \\
& \mbox{ } & \quad + \, 2 \, \sum_{j=1}^{\nu} \lambda_j \int_0^t \left(q_{x + e_j}(s) + q_{x- e_j}(s) \right) \, \rd s \,
  -  \,  \sum_{Z \subset \Lambda_L} \int_0^t \left[ \partial_x V_Z \right]  \left( \Phi_s( \mathrm{x}) \right)  \, \rd s \, . \nonumber   
\end{eqnarray}
Using (\ref{eq:dsolbdq}), (\ref{eq:dsolbdp}) readily follows. 
\end{proof}

\begin{lemma} \label{lem:pbbd} Let $X, Y \subset \mathbb{Z}^{\nu}$ be finite sets and take $L_0 \geq 1$ large enough
so that $X,Y \subset \Lambda_{L_0}$. For any $L \geq L_0$ and $t \in \mathbb{R}$, denote by $\alpha_t^L$ the dynamics corresponding
to (\ref{eq:defham2}). If the perturbation satisfies (\ref{eq:harmdom}) and (\ref{eq:2derbd}), then there exist positive numbers $K_1$ and $K_2$, both
independent of $L$, for which: given any functions $f: X \to \mathbb{C}$ and $g : Y \to \mathbb{C}$, the bound
\begin{equation} \label{eq:pbbd}
\left\| \left\{ \alpha_t^L \left( W(f) \right), W(g)  \right\} \right\|_{\infty} \, \leq \, K_1 \, |X| \, |Y| \, \| f \|_{\infty} \, \| g \|_{\infty} \, \exp( K_2 \, t^2 ) \, , 
\end{equation}
holds for all $t \in \mathbb{R}$.
\end{lemma}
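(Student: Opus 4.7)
The plan is to reduce the estimate to a direct, termwise application of Lemma~\ref{lem:dsolbd}. Writing $\alpha_t^L(W(f))(\mathrm{x}) = W(f)(\Phi_t^L(\mathrm{x}))$ and using the definition (\ref{eq:defweyl}), the chain rule gives, for any $y \in \Lambda_L$,
\begin{equation*}
\partial_{q_y} \alpha_t^L(W(f))(\mathrm{x}) = i \left( \sum_{x \in X} \mathrm{Re}[f(x)]\,\frac{\partial q_x(t)}{\partial q_y(0)} + \mathrm{Im}[f(x)]\,\frac{\partial p_x(t)}{\partial q_y(0)} \right) \alpha_t^L(W(f))(\mathrm{x}),
\end{equation*}
and analogously for $\partial_{p_y}$, with the denominators replaced by $p_y(0)$. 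The key observation is that $|\alpha_t^L(W(f))| \equiv 1$ and $|W(g)| \equiv 1$ pointwise on phase space, so when I form
\begin{equation*}
\left\{\alpha_t^L(W(f)), W(g)\right\} = \sum_{y \in \Lambda_L} \partial_{q_y}\alpha_t^L(W(f))\cdot \partial_{p_y}W(g) - \partial_{p_y}\alpha_t^L(W(f))\cdot \partial_{q_y}W(g),
\end{equation*}
only the sum over $y \in Y$ survives (since $\partial_{p_y}W(g) = i\,\mathrm{Im}[g(y)]\,W(g)$ and $\partial_{q_y}W(g) = i\,\mathrm{Re}[g(y)]\,W(g)$ vanish for $y \notin Y$), and every oscillating factor of modulus one can be pulled out.

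I would then estimate
\begin{equation*}
\left\| \left\{\alpha_t^L(W(f)), W(g)\right\} \right\|_{\infty} \leq 2 \sum_{x \in X, \, y \in Y} |f(x)|\,|g(y)| \left( \left|\frac{\partial q_x(t)}{\partial q_y(0)}\right| + \left|\frac{\partial q_x(t)}{\partial p_y(0)}\right| + \left|\frac{\partial p_x(t)}{\partial q_y(0)}\right| + \left|\frac{\partial p_x(t)}{\partial p_y(0)}\right| \right),
\end{equation*}
and apply Lemma~\ref{lem:dsolbd}. Each of the four flow derivatives is bounded by an expression of the form (polynomial in $t$)$\cdot \exp(K\,t^2)$, with the constant $K$ depending only on $\omega$, $\{\lambda_j\}$, $C_2$, and $\sum_{x} F_{\mu_2}(d(0,x))$ — all independent of $L$ since $F_{\mu_2}$ is summable on $\mathbb{Z}^{\nu}$. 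Absorbing the polynomial prefactor into a slightly enlarged exponent, one bounds each derivative by $\tilde{K}_1 \exp(K_2 t^2)$. The sums over $X$ and $Y$ produce the factor $|X|\,|Y|$, and the sup norms of $f$ and $g$ absorb $|f(x)|$ and $|g(y)|$.

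There is no real obstacle here: the whole argument is an $L^{\infty}$ bound built on two trivial inputs, the unit modulus of Weyl functions and the chain rule, plus the nontrivial a priori estimate of Lemma~\ref{lem:dsolbd}. The only mildly delicate point is to check that the constants $K_1$ and $K_2$ emerging from Lemma~\ref{lem:dsolbd} are genuinely $L$-independent; this is guaranteed by assumption (\ref{eq:2derbd}) and the summability of $F_{\mu_2}$ on the whole lattice $\mathbb{Z}^{\nu}$, which dominates the sum over $\Lambda_L$ uniformly in $L$. The $t^2$ (rather than $|t|$) in the exponent is inherent to the Gronwall-type estimate for the second-order system and cannot be improved by this route, but it is exactly what the statement of the lemma allows.
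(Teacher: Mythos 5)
Your proposal is correct and follows essentially the same route as the paper's own proof: expand the Poisson bracket, use the unit modulus of Weyl functions and the chain rule to restrict the sums to the supports of $f$ and $g$, and then invoke Lemma~\ref{lem:dsolbd}, absorbing its polynomial prefactors into the Gaussian factor $\exp(K_2 t^2)$ with $L$-independent constants coming from the summability of $F_{\mu_2}$ over $\mathbb{Z}^{\nu}$. No gaps; only the bookkeeping of constants (your $2\times$ sum versus the paper's $4\times$ max) differs trivially.
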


\begin{proof}
We first fix $L \geq L_0$ as in the statement of the lemma and prove the estimate on $\Lambda_L$. In this case, we
suppress the dependence of most quantities on $L$ to ease notation. Now, recall that for any fixed point $\mathrm{x}$,
\begin{equation} \label{eq:basepb}
\begin{split}
\left[ \left\{  \alpha_t(W(f)), W(g) \right\} \right]( \mathrm{x})  =  \sum_{y \in \Lambda_L} & \frac{ \partial}{ \partial q_y} \left[ \alpha_t(W(f)) \right] ( \mathrm{x}) \cdot  \frac{ \partial}{ \partial p_y} \left[ W(g) \right]( \mathrm{x})  \\
& -  \sum_{y \in \Lambda_L}  \frac{ \partial}{ \partial p_y} \left[ \alpha_t(W(f)) \right] ( \mathrm{x}) \cdot  \frac{ \partial}{ \partial q_y} \left[ W(g) \right]( \mathrm{x}) \, .
\end{split}
\end{equation}
Since
\begin{equation}
 \frac{ \partial}{ \partial p_y} \left[ W(g) \right]( \mathrm{x}) = i {\rm Im} \left[ g(y) \right] \, \left[ W(g) \right] ( \mathrm{x}) \quad \mbox{and} \quad  \frac{ \partial}{ \partial q_y} \left[ W(g) \right]( \mathrm{x}) = i {\rm Re} \left[ g(y) \right] \, \left[ W(g) \right] ( \mathrm{x}) \, ,
\end{equation}
the sums in (\ref{eq:basepb}) above are only over those $y$ in the support of $g$. The derivative of the time-evolved quantities may also be calculated, e.g.,
\begin{equation} \label{eq:dqyat}
\frac{ \partial}{ \partial q_y} \left[ \alpha_t(W(f)) \right] ( \mathrm{x}) \, = \, i \left( \sum_{x \in \Lambda_L} {\rm Re} \left[ f(x) \right] \cdot \frac{ \partial q_x(t)}{\partial q_y} \, + \, {\rm Im} \left[ f(x) \right] \cdot \frac{ \partial p_x(t)}{\partial q_y} \right) \cdot \left[ \alpha_t(W(f)) \right] ( \mathrm{x}) \, .
\end{equation}
Clearly, the sum in (\ref{eq:dqyat}) is only over those $x$ in the support of $f$, and a similar formula holds for
the derivative with respect to $p_y$.
Thus,
\begin{equation*}
\left| \left[ \left\{  \alpha_t(W(f)), W(g) \right\} \right]( \mathrm{x})  \right| \, \leq \, 4\, \| f \|_{\infty} \, \| g \|_{\infty} 
\sum_{x \in X, y \in Y} \max \left( \left| \frac{ \partial q_x(t)}{\partial q_y} \right| , \left| \frac{ \partial p_x(t)}{\partial q_y} \right| , \left| \frac{ \partial q_x(t)}{\partial p_y} \right| , \left| \frac{ \partial p_x(t)}{\partial p_y} \right| \right) \, .
\end{equation*} 
Using Lemma~\ref{lem:dsolbd}, (\ref{eq:pbbd}) immediately follows.
\end{proof}

{\it Acknowledgements.} 

Both authors would like to thank Bruno Nachtergaele and Benjamin Schlein for their  insight and useful discussions. The authors would like to acknowledge the hospitality of the Erwin Schrodinger Institute, and specifically the organizers of the Summer School on Current topics in Mathematical Physics, where work on this project began. This article is based on work supported in part by the U.S. National Science Foundation. In particular, R.S. was supported under grant \# DMS-0757424, and H.R. received support from NSF grants \# DMS-0605342, \# DMS-07-57581 and VIGRE grant \# DMS-0636297.

\thebibliography{hh}

\bibitem{Agarwal07}
Agarwal, Ravi, Ryoo, Cheon, and Kim, Young-Ho
{\it New integral inequalities for iterated integrals with applications.}
Journal of Inequalities and Applications, Vol. 2007, 24385

\bibitem{bratteli1997}
Bratteli, O. and Robinson D.W.:
{\it Operator Algebras and Quantum Statistical Mechanics. Volume 2.\/}, 
Second Edition. Springer-Verlag, 1997.

\bibitem{bravyi2006}
Bravyi, S. and Hastings, M.B. and Verstraete, F.:
{\it Lieb-Robinson bounds and the generation of correlations and 
topological quantum order},
Phys. Rev. Lett. {\bf 97}, 050401 (2006), arXiv:quant-ph/0603121.

\bibitem{butta2007}
Butt{\`a}, P., Caglioti, E., Di Ruzza, S., and Marchioro, C.
{\it On the propagation of a perturbation in an anharmonic system},
J. Stat. Phys. {\bf 127}, 313--325 (2007).

\bibitem{hastings2004}
Hastings, M.B.:
{\it Lieb-Schultz-Mattis in Higher Dimensions}, 
Phys. Rev. B {\bf 69}, 104431 (2004).

\bibitem{hastings2005}
Hastings, M.B. and Koma, T.:
{\it Spectral Gap and Exponential Decay of Correlations},
Commun. Math. Phys. {\bf 265}, 781--804 (2006).

\bibitem{hastings2007} 
Hastings, M.B.
{\it An area law for one dimensional quantum systems},
JSTAT, P08024 (2007).

\bibitem{lieb1972}
Lieb, E.H. and Robinson, D.W.:
{\it The Finite Group Velocity of Quantum Spin Systems},
Commun. Math. Phys.  {\bf 28}, 251--257 (1972).

\bibitem{LLL} 
Lanford, O.E., Lebowitz, J., Lieb, E.H.
{\it Time evolution of infinite anharmonic systems},
J. Stat. Phys. {\bf 16} no. 6, 453--461(1977).

\bibitem{NaOgSi} B. Nachtergaele, Y. Ogata, and R. Sims,
{\it Propagation of Correlations in Quantum Lattice Systems},
J. Stat. Phys. {\bf 124}, no. 1, (2006) 1--13.

\bibitem{harm}
Nachtergaele, B., Raz, H., Schlein, B., and Sims, R.,
{\it Lieb-Robinson Bounds for Harmonic and Anharmonic Lattice Systems},  
To appear in Commun. Math. Phys.

\bibitem{nachtergaele2005}
Nachtergaele, B. and Sims, R.:
{\it Lieb-Robinson Bounds and the Exponential Clustering Theorem},
Commun. Math. Phys. {\bf 265}, 119-130 (2006).

\bibitem{nasi}
Nachtergaele, B. and Sims, R.:
{\it  A Multi-Dimesional Lieb-Schultz-Mattis Theorem},
Commun. Math. Phys. {\bf 276}, 437--472 (2007).

\bibitem{review}
Nachtergaele, B. and Sims, R.:
{\it Locality Estimates for Quantum Spin Systems},
arxiv:0712.3318

\bibitem{MPPT} C. Marchioro, A. Pellegrinotti, M. Pulvirenti, and L. Triolo,
{\it Velocity of a perturbation in infinite lattice systems},
J. Statist. Phys. {\bf 19}, no. 5, (1978), 499--510.

\end{document}